\newcommand{\orcid}[1]{\href{https://orcid.org/#1}{\includegraphics[width=7pt]{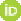}}}
\newcommand{\be}{\begin{equation}}
\newcommand{\ee}{\end{equation}}
\newcommand{\beq}{\begin{eqnarray}}
\newcommand{\eeq}{\end{eqnarray}}
\newcommand{\mc}[1]{\mathcal{#1}}
\renewcommand\bra[1]{{\langle{#1}|}}
\renewcommand\ket[1]{{|{#1}\rangle}}
\renewcommand{\Tr}{\text{Tr}}
\newcommand{\I}{\mathbbm{1}}
\newcommand{\im}{\mathbbm{i}}
\newtheorem{theorem}{Theorem}
\newtheorem{corollary}{Corollary}
\theoremstyle{remark}
\newtheorem{definition}{Definition}
\definecolor{addedcolor}{rgb}{0.0,0.0,1.0}   
\definecolor{deletedcolor}{rgb}{0.8,0.0,0.0} 
\begin{document}

\title{Optimality of universal conclusive entanglement concentration protocols}

\author{Alexandre C. Orthey, Jr.\orcid{0000-0001-8111-3944}}
\email{alexandre.orthey@gmail.com}
\affiliation{Institute of Fundamental Technological Research, Polish Academy of Sciences, Pawi\'nskiego 5B, 02-106 Warsaw, Poland.}
\affiliation{Faculty of Mathematics, Informatics and Mechanics, University of Warsaw, ulica Banacha 2, 02-097 Warsaw, Poland}

\author{Aby Philip\orcid{0000-0002-4608-7522}}
\affiliation{Institute of Fundamental Technological Research, Polish Academy of Sciences, Pawi\'nskiego 5B, 02-106 Warsaw, Poland.}

\author{Tulja Varun Kondra\orcid{0000-0002-4228-6129}}
\affiliation{Institute for Theoretical Physics III, Heinrich Heine University D\"{u}sseldorf, Universit\"{a}tsstra{\ss}e 1, D-40225 D\"{u}sseldorf, Germany}

\author{Alexander Streltsov\orcid{0000-0002-7742-5731}}
\affiliation{Institute of Fundamental Technological Research, Polish Academy of Sciences, Pawi\'nskiego 5B, 02-106 Warsaw, Poland.}

\begin{abstract}
Entanglement concentration is essential for quantum technologies; yet, rigorous bounds on the success probability for universal protocols---those requiring no prior knowledge about the input state---have remained underexplored. We establish such fundamental limits for conclusive protocols distilling a perfect Bell state from pure two-qubit states by deriving the optimal success probability starting with: two copies of a state with known Schmidt basis, and four copies of a state with unknown Schmidt basis, using concatenated two-qubit operations. We prove that a known protocol achieves these bounds, confirming its optimality. Crucially, universality imposes an inherent efficiency trade-off, yielding an average success probability of just $2/105$ over Haar measure.
\end{abstract}

\maketitle

\section{Introduction}

Entanglement is a fundamental resource in quantum information processing \cite{horodecki2009quantum}, allowing for applications such as quantum teleportation \cite{bennett1993teleporting}, cryptography \cite{ekert1991}, and computation \cite{Shor2006Jul}. A critical challenge in practical settings is the \textit{concentration} of entanglement: distilling near-perfect Bell states $\ket{\phi^{+}} = (\ket{00} + \ket{11})/\sqrt{2}$ from multiple copies of partially entangled states shared between distant parties. While protocols exist for this task \cite{Bennett1996Jan,Deutsch1996Sep,Bennett1996Nov,Dur2007Jul,Torres2016Nov,Rozpedek2018Jun,Preti2022Aug,Miguel-Ramiro2025Apr} with several experimental implementations \cite{Pan2001Apr,Pan2003May,Hu2021Jan,Ecker2021Jul,reichle2006experimental,Kalb2017Jun}, their \textit{universality}, i.e., effectiveness across a large set of unknown input states, and \textit{optimality}, that is, if they produce $\ket{\phi^+}$ with the highest possible success probability, remain not well explored in quantum information theory. An exception is the result from Ref.~\cite{Zang2025May}, a no-go theorem stating that there is no LOCC (local operations and classical communication) protocol capable of generating a state with higher fidelity than its input states for all two-qubit states, including mixed ones. 

This work establishes rigorous optimality bounds for universal conclusive entanglement concentration protocols (ECPs), strictly for finite copies of pure input states and two-qubit operations. Unlike faithful protocols \cite{Vidal2000Jun} that deterministically produce a state with higher fidelity with the target state, \textit{conclusive} ECPs \cite{Vidal1999entanglement} use probabilistic LOCC  maps to either produce a perfect target state or fail, with no intermediate outcomes. By considering a smaller set of states and not allowing for intermediate outcomes, we avoid contradiction with the no-go theorem in \cite{Zang2025May}.  

In this work, we are interested in producing $\ket{\phi^+}$ outright with some probability, rather than in just increasing the fidelity of the input states with $\ket{\phi^+}$. The core challenge addressed here is: for $n=2$ and $n=4$, what is the optimal (or maximal) success probability for distilling a Bell pair from $n$ copies of an arbitrary pure two-qubit state, when no prior knowledge of the state's entanglement structure is available?  

We present a comprehensive optimality analysis for universal conclusive ECPs, considering two scenarios: (i) states with a known Schmidt basis, and (ii) completely arbitrary states with an unknown basis. For the first case, we prove the optimal probability of distilling a Bell state from two copies is a simple function of the entanglement of the input state. For the second, more challenging case, we derive a tight upper bound on the probability of transforming two copies into an entangled state within a fixed Schmidt basis. Furthermore, we prove a tight bound for the direct distillation of a Bell state from four copies of an arbitrary pure state, using protocols restricted to concatenated two-qubit operations.

The structural framework of concatenated two-qubit operations is motivated by both fundamental and practical considerations. Physically, since the input states possess a completely unknown Schmidt basis while the target Bell state has a fixed basis, any universal protocol must eventually bridge this gap; our concatenated approach explicitly isolates and optimally solves this necessary intermediate step. Furthermore, from an experimental standpoint, native entangling operations in near-term quantum architectures are predominantly limited to two qubits. This can be seen in several experimental realizations of entanglement distillation~\cite{reichle2006experimental,chen2017experimental,Kalb2017Jun,Hu2021Jan,Zhou2025Jul}. Restricting our analysis to two-qubit interactions avoids the prohibitive complexity and noise associated with global four-qubit joint measurements, ensuring our derived limits and protocols are directly relevant for practical implementations.

Our analysis employs a unified framework, established by Definitions \ref{def:UniversalCECP} and \ref{def:OptimalUCECP} below, that formalizes universal conclusive ECPs as LOCC maps requiring unit fidelity to $\phi^{+}$, while maximizing success probability across all inputs. For unknown Schmidt basis states, we derive tight upper bounds on success probabilities by imposing algebraic constraints on Kraus operators that enforce universality through nullification of separable subspaces. We then construct explicit LOCC protocols with operators that saturate these bounds. Our results show that K\'alm\'an et al.'s \cite{Kalman2025Dec} protocol is an optimal universal ECP in both scenarios. While Vidal's formula \cite{Vidal1999entanglement} gives higher conversion probabilities, as one is allowed to tailor the protocol to the specific input, we prove that universal protocols necessarily achieve strictly lower probabilities due to the universality constraint. Finally, we compute average performance using Dirichlet distributions for Haar-random states via moment formulae \cite{Zyczkowski2001Aug,bengtsson2017geometry,Schiavo2019Jan}.

\section{Optimality of a Universal Conclusive ECP}

Let $\rho_{AB}$ be a bipartite state shared between Alice and Bob. Let $\mathcal{E}_1$ and $\mathcal{E}_2$ be two completely positive trace non-increasing LOCC maps, such that $\mathcal{E}=\mathcal{E}_1+\mathcal{E}_2$ is a completely positive trace-preserving map. Having said that, we define:
\begin{definition}[Universal Conclusive ECP]\label{def:UniversalCECP}
        Let $\mathcal{S}$ be the subset of states $\rho$ that are shared between Alice and Bob. The $n$-to-$1$ map $\mathcal{E}_1$ is called a universal conclusive ECP for a given number of copies $n$ if, for every $\rho\in\mathcal{S}$, it holds that $\langle\phi^+\vert\mathcal{E}_1\left(\rho^{\otimes n}\right)\vert\phi^+\rangle=\operatorname{Tr}\left[\mathcal{E}_1\left(\rho^{\otimes n}\right)\right]$.
    \end{definition}
\begin{definition}[Optimal Universal Conclusive ECP]\label{def:OptimalUCECP}
        Let $\mathcal{S}$ be the subset of states $\rho$ that are shared between Alice and Bob. Let $\mathcal{E}_1$ be any protocol satisfying Def. \ref{def:UniversalCECP}. The map $\mathcal{E}_1^\text{optimal}$ is called the optimal universal conclusive ECP for a given number of copies $n$ if, for every $\rho\in\mathcal{S}$ and for any $\mathcal{E}_1$, it holds that $\operatorname{Tr}\left[\mathcal{E}_1\left(\rho^{\otimes n}\right)\right]\leqslant \operatorname{Tr}\left[\mathcal{E}_1^\text{optimal}\left(\rho^{\otimes n}\right)\right]$.
\end{definition}

First, we would like to note that there can be no universal conclusive ECP for the set of pure states shared between Alice and Bob when $n=1$. This follows from some simple arguments. Consider, toward a contradiction, that there exists a Universal Conclusive ECP for $n=1$ which means $\mathcal{E}_{1}(\vert\psi\rangle\langle\psi\vert)=p\vert\phi^+\rangle\langle\phi^+\vert$, where $p>0$. Let $\rho=(1-\epsilon)\,\vert\psi\rangle\langle\psi\vert + (\epsilon/4)\,\mathbb{I}$, then $\mathcal{E}_{1}(\rho)= p^{\prime}\vert\phi^+\rangle\langle\phi^+\vert$ where $p^{\prime}>0$. (Note, we are allowed to do this because $\rho$ can be expressed as a convex combination of pure states) This means that we will have transformed a full rank state to a pure entangled state by an LOCC protocol. However, this is not possible, since it is impossible to transform a full rank state into a pure resourceful state using free operations~\cite{PhysRevLett.125.060405,PhysRevA.101.062315}. Hence, we have arrived at a contradiction, and hence there is no Universal Conclusive ECP for the set of pure states shared between Alice and Bob and $n=1$. 

Similarly, using the aforementioned fact from~\cite{PhysRevLett.125.060405,PhysRevA.101.062315}, it is easy to see that there can be no universal conclusive ECP for the set of all states of a given dimension shared between Alice and Bob for any $n$. Hence, in this work, we shall only be looking at universal conclusive ECPs for sets of pure states.

\section{concentration protocol for states in a known Schmidt basis}

We will start by looking into universal conclusive ECPs over a particular set of pure states: Schmidt states. Let $\mathcal{S}_{2}$ be the set of bipartite pure states that have the same Schmidt basis. Without loss of generality, let the Schmidt basis under consideration be $\{\vert00\rangle_{AB},\vert11\rangle_{AB}\}$. Then, any state in $\mathcal{S}_{2}$ can be written as $\vert\psi_S\rangle_{AB} = \alpha\vert00\rangle_{AB} + \beta\vert11\rangle_{AB}$, where $\alpha,\beta\in\mathbb{C}$ satisfy $|\alpha|^2+|\beta|^2=1$. We will drop the subindex notation $AB$ whenever suitable. Also, for any pure state $\ket{\psi}$, we define $\psi\equiv\ket{\psi}\bra{\psi}$. Our first result is summarized in the following theorem:
\begin{theorem}\label{theo1}
    Following Def. \ref{def:OptimalUCECP}, the optimal universal conclusive ECP over the set of Schmidt states $\ket{\psi_S}=\alpha\ket{00}+\beta\ket{11}$ transforms two copies of such a state into $\phi^+$ with optimal probability $P_{\psi_S^{\otimes 2}\to\phi^+}\coloneqq 2|\alpha\beta|^2$.
\end{theorem}
\noindent\textit{Sketch of the proof.} The reader can find the details of the proof in Appendix \ref{appendix_A}. Since both input and output states are pure under the map $\mathcal{E}_{\phi^+}(\cdot)=\sum_i M_i(\cdot)M_i^\dagger$, the action of \textit{each} Kraus operator $M_i$ over the product $\ket{\psi_S}^{\otimes 2}$ must result into a state proportional to $\ket{\phi^+}_{AB}\ket{\text{garb}}_{A'B'}$, where $\ket{\text{garb}}_{A'B'}$ is some garbage state to be discarded. In addition to that, copies of product states such as $\ket{00}\ket{00}$ and $\ket{11}\ket{11}$ should not yield entangled states under LOCCs. These both conditions together impose an upper bound on the probability of obtaining $\phi^+$ as $2|\alpha\beta|^2$.  

Note that there can not be a universal conclusive ECP for the set of pure states shared between Alice and Bob when $n=2$. To see this, assume that such a protocol exists and apply it on the convex combination of any two states with different Schmidt bases, such as $\ketbra{\psi_S}^{\otimes 2}$ and $\ketbra{++}^{\otimes 2}$. Then we will have again transformed a full rank state to a pure entangled state by an LOCC protocol. However, this is not possible~\cite{PhysRevLett.125.060405,PhysRevA.101.062315}. Hence, we have arrived at a contradiction, and hence there is no Universal Conclusive ECP for the set of pure states shared between Alice and Bob and $n=2$.
    
\section{concentration protocol for states in an unknown Schmidt basis}

Now,  we consider concentration protocols given by the map $\mathcal{E}_{1}$ satisfying Def. \ref{def:OptimalUCECP} applied to four copies of a state $\ket{\psi}\in\mc{S}$, such that $\mc{S}$ is the set of all pure two-qubit states. For this case, we are not going to assume a known Schmidt basis, and therefore, we can write $\ket{\psi}$ in the computational basis as
    \begin{equation}\label{psi_c1234}
        \ket{\psi}=c_1\ket{00}+c_2\ket{01}+c_3\ket{10}+c_4\ket{11},
    \end{equation}
where $c_i\in\mathbb{C}$ for every $i$ and $\sum_{i=1}^4|c_i|^2=1$. By Def. \ref{def:OptimalUCECP}, we have that $\mathcal{E}_1(\psi^{\otimes 4})=P_{\psi^{\otimes 4}\to\phi^+}\,\phi^+$, where $P_{\psi^{\otimes 4}\to\phi^+}$ is the probability of success given by the renormalization required by some measurement included in the protocol, i.e., $P_{\psi^{\otimes 4}\to\phi^+}=\Tr\left[\mathcal{E}_{1}(\psi^{\otimes 4})\right]$. 

From Theorem \ref{theo1}, we know the optimal probability of obtaining state $\ket{\phi^+}$ from two copies of $\ket{\psi_S}$. However, that requires that we know the Schmidt basis before the beginning of the protocol. To use arbitrary states \eqref{psi_c1234} as input states, we can find a universal protocol that converts two copies of $\ket{\psi}$ into a state with a known Schmidt basis $\vert\psi_{S}\rangle$, and then we can apply the protocol that yields Theorem \ref{theo1} onto two copies of $\psi_S$ to obtain state $\phi^+$. As we are going to show, the same protocol that produces Schmidt states is also the protocol that satisfies Theorem \ref{theo1} (see Fig. \ref{fig:diagram_4_copies}). Note that, in principle, the coefficients $\alpha$ and $\beta$ of the output state are not specified; what matters is that the protocol must universally produce a state in a known Schmidt basis.

\begin{figure}[h]
    \centering
    \includegraphics[width=0.9\linewidth]{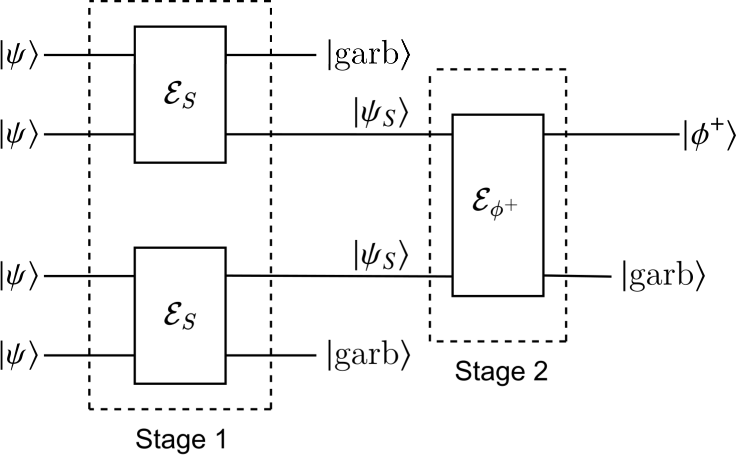}
    \caption{Optimal universal conclusive ECP applied to four copies of an unknown two-qubit state $\ket{\psi}$. In the first stage of the protocol, four copies of $\ket{\psi}$ are transformed into two copies of $\ket{\psi_S}$, a state with a known Schmidt basis. In the second stage, two copies of $\ket{\psi_S}$ are transformed into one copy of $\ket{\phi^+}$.}
    \label{fig:diagram_4_copies}
\end{figure}

\section{Producing entangled states in a known Schmidt basis}

Let us look for the optimal probability of obtaining an entangled state in a known Schmidt basis from two copies of \eqref{psi_c1234} by using a universal conclusive protocol. We denote the completely positive trace non-increasing map as $\mathcal{E}_{S}$ and the corresponding Kraus operators as $M_i$, such that
\begin{equation}\label{Lambda_S}
    \mathcal{E}_{S}(\cdot)=\sum_i M_i (\cdot) M_i^\dagger,
\end{equation}
where $\sum_i M_iM_i^\dagger\leqslant \mathbb{I}$. As in the proof of Theorem \ref{theo1}, we know that product states must not produce any entanglement under LOCCs. Hence, we must have that
\begin{equation}\label{systemM}
    \begin{array}{cc}
        M_i\ket{0000}_{ABA'B'}=0,& M_i\ket{0101}_{ABA'B'}=0, \\
        M_i\ket{1010}_{ABA'B'}=0,& M_i\ket{1111}_{ABA'B'}=0.
    \end{array}
\end{equation}
for every $i$. From the above, we can see that $M_i$ must act such that
\begin{align}\label{M_ipsi1}
        M_i & \ket{\psi}^{\otimes 2}\nonumber\\
        =& M_i [   
         c_1c_2\left( \ket{0001}+\ket{0100}\right) + c_1c_3\left( \ket{0010}+\ket{1000}\right) \nonumber\\
        &+c_2c_4\left( \ket{0111}+\ket{1011}\right) + c_3c_4\left( \ket{1011}+\ket{1110}\right) \nonumber \\
        &+c_1c_4\left( \ket{0011}+\ket{1100}\right) + c_2c_3\left( \ket{0110}+\ket{1001}\right)]
\end{align}
where $\ket{jklm}\in\mathcal{H}_{ABA'B'}$ whenever $j,k,l,m\in\{0,1\}$. 
   
We can impose further constraints on the action of $M_{i}$. Consider now the case when the input state in~\eqref{psi_c1234} is such that $c_3=c_4=0$, which implies that the input state is a product state $\ket{\psi}=\ket{0}_A\otimes(c_1\ket{0}_B+c_2\ket{1}_B)$. We know that applying map $\mathcal{E}_{S}$ on copies of such a state must never produce an entangled state $\psi_S$. Using the same argument when $c_1=c_2=0$, $c_2=c_4=0$, or $c_1=c_3=0$, we infer that, from \eqref{M_ipsi1}, $M_{i}$ must satisfy the following equations:
\begin{equation}
    \begin{array}{cc}\label{systemM2}
        M_i\left(\ket{0001}+\ket{0100}\right) =0,& M_i\left(\ket{0010}+\ket{1000}\right) =0,\\
        M_i\left(\ket{0111}+\ket{1011}\right) =0,& M_i\left(\ket{1011}+\ket{1110}\right) =0,
    \end{array}
\end{equation}
for all $i$. Now, we can use constraints \eqref{systemM} and \eqref{systemM2} to help us find the explicit form of the Kraus operators that can do the transformation $\psi^{\otimes 2}\to\psi_S$. In fact, each operator $M_i$ must be a product of two identical Kraus operators acting on spaces $\mc{H}_{AA'}$ and $\mc{H}_{BB'}$, that is
\begin{equation}\label{Mi=KiKi}
    (M_i)_{AA'BB'}=(K_i)_{AA'}\otimes (K_i)_{BB'}, \qquad \forall i.
\end{equation}
This symmetry is necessary because of the universality requirement of our protocol; since we do not know how entanglement is distributed between Alice and Bob, the protocol should treat each party equally. From \eqref{systemM}, \eqref{systemM2}, and \eqref{Mi=KiKi}, we can state our next result:
\begin{theorem}\label{theo2}
    To produce arbitrary states in a known Schmidt basis from two copies of arbitrary states $\ket{\psi}_{AB}$ given by \eqref{psi_c1234}, a map $\mathcal{E}_S(\cdot)=\sum_i M_i(\cdot) M_i^\dagger$ must be composed of Kraus operators $M_i=K_i\otimes K_i$ acting on $\mathcal{H}_{AA'}\otimes\mathcal{H}_{BB'}$ such that
    \begin{equation}\label{K_i_main}
        K_i=a(\ket{00}\bra{01} + \ket{00}\bra{10})+b(\ket{10}\bra{01} - \ket{10}\bra{10})
    \end{equation}
    for every $i$, where $a$ and $b$ are any non-zero complex numbers satisfying $2(|a|^4+|b|^4)\leqslant 1$. The success probability $P_{\psi^{\otimes 2}\to\psi_S}$ of such a protocol is bounded by $\overline{P}_{\psi^{\otimes 2}\to\psi_S}=2(|c_1c_4|+|c_2c_3|)^2$, i.e., $P_{\psi^{\otimes 2}\to\psi_S}<\overline{P}_{\psi^{\otimes 2}\to\psi_S}$.
\end{theorem}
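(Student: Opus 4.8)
The plan is to reduce the structural part of the theorem to a small linear-algebra problem for the single-party operator $K_i$, then to exhibit the explicit action of $\mathcal{E}_S$ on $\psi^{\otimes 2}$ and read off the probability bound from the channel normalization. First I would establish the product form $M_i=K_i\otimes K_i$ of Eq.~\eqref{Mi=KiKi}: every separable---hence every LOCC---Kraus operator factorizes into a local operator on $\mc{H}_{AA'}$ times one on $\mc{H}_{BB'}$, and since the protocol receives no information about which party carries which half of the Schmidt decomposition, universality forces the two factors to coincide. Feeding the four product vectors of Eq.~\eqref{systemM} through $M_i=K_i\otimes K_i$ gives $(K_i\ket{00})^{\otimes 2}=(K_i\ket{11})^{\otimes 2}=0$, hence $K_i\ket{00}=K_i\ket{11}=0$, so that $K_i=\ket{v_i}\bra{01}+\ket{w_i}\bra{10}$ for vectors $\ket{v_i},\ket{w_i}\in\mc{H}_{AA'}$ still to be pinned down.

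Next I would use the defining property of $\mathcal{E}_S$: it must output an entangled state \emph{in a fixed, input-independent Schmidt basis} tensored with a discardable garbage register. Expanding $M_i\ket{\psi}^{\otimes 2}$ with the help of $K_i\ket{00}=K_i\ket{11}=0$ yields
\begin{equation}
    M_i\ket{\psi}^{\otimes 2}=c_1c_4\left(\ket{vv}+\ket{ww}\right)+c_2c_3\left(\ket{vw}+\ket{wv}\right),
\end{equation}
where in each pair the first factor lives in $\mc{H}_{AA'}$, the second in $\mc{H}_{BB'}$, and $\ket{vw}\equiv\ket{v_i}\otimes\ket{w_i}$ and so on. Requiring this to equal $(\text{garbage})_{A'B'}\otimes(\text{Schmidt state})_{AB}$ for \emph{every} admissible pair $(c_1c_4,c_2c_3)$---in particular for the product inputs that produce Eq.~\eqref{systemM2}---forces $\ket{v_i}$ and $\ket{w_i}$ to carry one and the same single-qubit state $\ket{g_i}$ on $A'$, and to have $A$-components $a_i\ket{0}+b_i\ket{1}$ and $a_i\ket{0}-b_i\ket{1}$ respectively (after a relabeling that sets the fixed Schmidt basis to the computational one), with $a_i,b_i\neq0$ so that the output can be genuinely entangled; this is precisely Eq.~\eqref{K_i_main}. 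I expect this step to be the crux: it is essentially a rank-one/factorization analysis of the two symmetric bilinears above across the $AB\mid A'B'$ cut, and one must argue away every competing choice of $\ket{v_i},\ket{w_i}$ that would either spoil the fixed Schmidt basis or create entanglement out of a product input.

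With Eq.~\eqref{K_i_main} established, a short computation shows that $M_i\ket{\psi}^{\otimes 2}$ equals $\ket{g_i}\ket{g_i}$ on $A'B'$ times $2a_i^2(c_1c_4+c_2c_3)\ket{00}+2b_i^2(c_1c_4-c_2c_3)\ket{11}$ on $AB$, a Schmidt state as required, so that with $A\equiv\sum_i|a_i|^4$ and $B\equiv\sum_i|b_i|^4$,
\begin{align}
    P&\equiv\Tr\!\left[\mathcal{E}_S(\psi^{\otimes 2})\right]=\sum_i\norm{M_i\ket{\psi}^{\otimes 2}}^2\nonumber\\
    &=4A\,|c_1c_4+c_2c_3|^2+4B\,|c_1c_4-c_2c_3|^2.
\end{align}
Finally I would impose the trace-non-increasing condition $\sum_i M_i^\dagger M_i\leqslant\I$: each $M_i^\dagger M_i=(K_i^\dagger K_i)^{\otimes 2}$ is diagonal in one and the same ($i$-independent) orthonormal basis built from $(\ket{01}\pm\ket{10})/\sqrt2$, with nonzero eigenvalues $4|a_i|^4$, $4|a_i|^2|b_i|^2$ (twice), and $4|b_i|^4$; simultaneous diagonality then forces $4A\leqslant1$ and $4B\leqslant1$, hence $|a_i|^4,|b_i|^4\leqslant 1/4$ and in particular the stated per-operator bound $2(|a_i|^4+|b_i|^4)\leqslant1$. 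Combining with the triangle inequalities $|c_1c_4\pm c_2c_3|\leqslant|c_1c_4|+|c_2c_3|$,
\begin{align}
    P&\leqslant 4(A+B)\left(|c_1c_4|+|c_2c_3|\right)^2\nonumber\\
    &\leqslant 2\left(|c_1c_4|+|c_2c_3|\right)^2=\overline{P}_{\psi^{\otimes 2}\to\psi_S};
\end{align}
the last step cannot be an equality, since the two triangle inequalities require mutually incompatible relative phases of $c_1c_4$ and $c_2c_3$ and so cannot be saturated at once, giving $P<\overline{P}_{\psi^{\otimes 2}\to\psi_S}$.
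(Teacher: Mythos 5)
Your reduction to $K_i=\ket{v_i}\bra{01}+\ket{w_i}\bra{10}$ from \eqref{systemM} is correct (given the symmetric product ansatz \eqref{Mi=KiKi}, which the paper also takes as a modeling assumption), your expansion $M_i\ket{\psi}^{\otimes 2}=c_1c_4(\ket{vv}+\ket{ww})+c_2c_3(\ket{vw}+\ket{wv})$ checks out, and everything downstream of \eqref{K_i_main} is sound—indeed your route to the bound (explicit output state, the observation that all $M_i^\dagger M_i$ are simultaneously diagonal in the $(\ket{01}\pm\ket{10})/\sqrt{2}$ basis so that $4\sum_i|a_i|^4\leqslant1$ and $4\sum_i|b_i|^4\leqslant1$, then two triangle inequalities) is more transparent than the paper's bookkeeping with $p_{i,j}$, $g$ and $f$. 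But there is a genuine gap exactly at the step you yourself flag as the crux: you assert, without argument, that requiring $\ket{vv}+\ket{ww}$ and $\ket{vw}+\ket{wv}$ to factor as $(\text{Schmidt state})_{AB}\otimes(\text{garbage})_{A'B'}$ forces $\ket{v_i}$ and $\ket{w_i}$ to be product across $A|A'$, to share one and the same $A'$ factor, and to have $A$-parts $a_i\ket{0}\pm b_i\ket{1}$. None of this is obvious: a priori $\ket{v_i},\ket{w_i}$ could be entangled across $A|A'$, could carry different $A'$ components, or could produce two \emph{proportional} Schmidt states with non-proportional garbage (the factorization of a linear combination $\lambda\ket{s_1}\ket{g_1}+\mu\ket{s_2}\ket{g_2}$ for all $\lambda,\mu$ only forces $\ket{g_1}\propto\ket{g_2}$ \emph{or} $\ket{s_1}\propto\ket{s_2}$, and the second branch must be excluded separately). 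This elimination is precisely where the paper spends its effort (Pauli decomposition of $K_i$, the eight relations \eqref{conditions}, and the resolution of the resulting fourteen equations into eight equivalent solutions); without it the ``must be composed of'' clause of the theorem—the necessity of \eqref{K_i_main}—is unproven.

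Two smaller points. First, your strictness argument ($P<\overline{P}$) assumes both $c_1c_4\neq0$ and $c_2c_3\neq0$; if one vanishes, both triangle inequalities are saturated trivially and, with $|a|=|b|=1/\sqrt{2}$, $4(A+B)=2$ exactly, so $P=\overline{P}$. The paper's own strictness argument has the same blind spot (its stated gap $4(1-f)|c_1c_2c_3c_4|$ also vanishes there), so this is a shared defect rather than a new one, but you should state the nondegeneracy hypothesis. Second, your opening claim that every LOCC Kraus operator factorizes as $K\otimes K$ with \emph{equal} factors is a symmetry postulate motivated by universality, not a theorem; the paper treats it the same way, but it deserves to be labelled as an assumption rather than derived from ``receives no information about which party carries which half.''
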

\noindent\textit{Sketch of the proof.} The reader can find the details of the proof in Appendix \ref{appendix_B}. From \eqref{systemM} and \eqref{systemM2}, we find that $M_i$ must act on $\ket{\psi}^{\otimes 2}$ such that $M_i\ket{\psi}^{\otimes 2}= \sqrt{2} M_i (c_1c_4\ket{\phi_1}+c_2c_3\ket{\phi_2})$, where we introduced the notation $\ket{\phi_1}\coloneqq(\ket{0011}+\ket{1100})/\sqrt{2}$ and $\ket{\phi_2}\coloneqq(\ket{0110}+\ket{1001})/\sqrt{2}$. It follows that $M_i$ applied to $\ket{\phi_j}$ must result in $\ket{\psi_S^{(j)}}=\alpha_j\ket{00}+\beta_j\ket{11}$ with some probability $p_{i,j}$ for $j=1,2$. The linear combination of these Schmidt states also configures a Schmidt state. Then, we can write the Kraus operators using a Pauli basis, i.e., $K_i=\sum_{kl} r_{kl}\sigma_k\otimes\sigma_l$, and impose conditions \eqref{systemM} and \eqref{systemM2}. Normalization conditions then give us the constraint $2(|a|^4+|b|^4)\leqslant 1$, where $a=r_{14}/4$ and $b=r_{34}/4$. Following this constraint, we can directly calculate the success probability $P_{\psi^{\otimes 2}\to\psi_S}$ and derive the upper bound $\overline{P}_{\psi^{\otimes 2}\to\psi_S}=2(|c_1c_4|+|c_2c_3|)^2$, where $\overline{P}_{\psi^{\otimes 2}\to\psi_S} - P_{\psi^{\otimes 2}\to\psi_S} > 4(1-f)|c_1c_2c_3c_4|$, $0<f<1$, and $f=2||a|^4-|b|^4|$.

Note that the success probability $P_{\psi^{\otimes 2}\to\psi_S}$ can reach the bound $\overline{P}_{\psi^{\otimes 2}\to\psi_S}$ if, and only if, $f=1$, which in turn requires $|a|=1/\sqrt[4]{2}$ and $b=0$ (or the opposite). However, Kraus operators \eqref{K_i_main} with such parameters can only generate product states (see Appendix \ref{appendix_B}), which would be useless in the second stage of the protocol. In Fig. \ref{fig:density}, we plot $f$ as a function of $|a|$ and $|b|$. Only in the dark regions, the success probability gets closer to the upper bound $\overline{P}_{\psi^{\otimes 2}\to\psi_S}$. 

\begin{figure}[h]
    \centering
    \includegraphics[width=\linewidth]{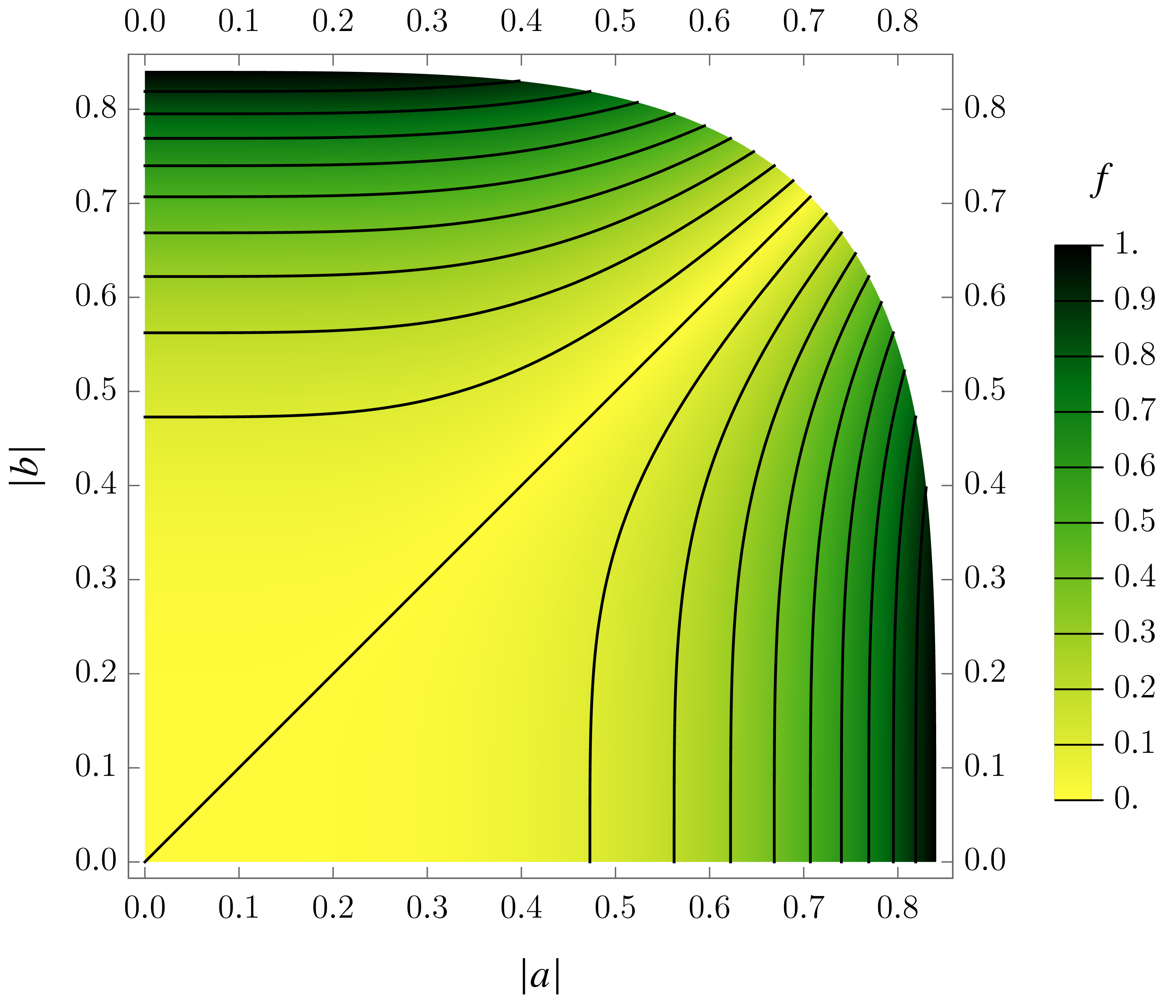}
    \caption{Domain of the absolute value of coefficients $a$ and $b$ for Kraus operators satisfying Theorem \ref{theo2}: $\{a,b\in\mathbb{C}\,\vert\, 2(|a|^4+|b|^4)\leqslant 1 \text{ and } (a\neq 0 \text{ or } b\neq 0)\}$. Function $f=2||a|^4-|b|^4|$ determines how much the success probability can get close to the bound $\overline{P}_{\psi^{\otimes 2}\to\psi_S}$; if $f\sim 1$, the bound is achieved.}
    \label{fig:density}
\end{figure}

\section{Optimal probability of producing $\phi^+$ from 4 copies of unknown two-qubit state}

Now, we shall combine Theorems \ref{theo1} and \ref{theo2} and derive the following result:

\begin{theorem}\label{theo3}
    The success probability of transforming 4 copies of $\ket{\psi}$ into $\phi^+$ using a universal conclusive ECP based on concatenated two-qubit operations is bounded by
    \begin{equation}\label{opt4copies}
    P_{\psi^{\otimes 4}\to\phi^+}\leqslant 2\left| c_2^4 c_3^4 \right| + 2\left| c_1^4 c_4^4 \right| -4\mathrm{Re}\left[ c_1^2 c_4^2 (c_2^*)^2 (c_3^*)^2\right].
    \end{equation}
\end{theorem}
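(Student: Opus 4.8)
\noindent\textit{Proof strategy.}
The plan is to prove \eqref{opt4copies} from the two-stage structure of Fig.~\ref{fig:diagram_4_copies}, which the universality and two-qubit-operation constraints force on the protocol: stage one applies a map of the type classified in Theorem~\ref{theo2} to the two input pairs $(1,2)$ and $(3,4)$, turning four copies of $\ket{\psi}$ into two copies of a state $\ket{\psi_S}=\alpha\ket{00}+\beta\ket{11}$ in a fixed Schmidt basis, and stage two feeds those two copies into the optimal protocol of Theorem~\ref{theo1}, which succeeds with probability $2|\alpha\beta|^2$. The total success probability is the product of the three conditional probabilities, which I will then maximize over the free parameters $a,b$ of the Kraus operators \eqref{K_i_main}.

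\textit{Step 1 --- the intermediate state.} Taking the success Kraus operator $M=K\otimes K$ with $K$ given by \eqref{K_i_main}, so that $K\ket{01}=a\ket{00}+b\ket{10}$ and $K\ket{10}=a\ket{00}-b\ket{10}$, and using the reduction $M\ket{\psi}^{\otimes2}=\sqrt{2}\,M\!\left(c_1c_4\ket{\phi_1}+c_2c_3\ket{\phi_2}\right)$ from the proof of Theorem~\ref{theo2}, a direct computation gives $M\ket{\psi}^{\otimes2}=2\!\left[(c_1c_4+c_2c_3)a^2\ket{00}+(c_1c_4-c_2c_3)b^2\ket{11}\right]_{AB}\otimes\ket{00}_{A'B'}$, so the ancillas $A'B'$ factor out and stage one produces the Schmidt state $\ket{\psi_S}$ with $\alpha\propto(c_1c_4+c_2c_3)a^2$, $\beta\propto(c_1c_4-c_2c_3)b^2$, with probability $p=4Q$, where $Q=|c_1c_4+c_2c_3|^2|a|^4+|c_1c_4-c_2c_3|^2|b|^4$; normalizing then gives $|\alpha|^2=|c_1c_4+c_2c_3|^2|a|^4/Q$ and $|\beta|^2=|c_1c_4-c_2c_3|^2|b|^4/Q$.

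\textit{Step 2 --- combine and optimize.} Applying stage one independently to the pairs $(1,2)$ and $(3,4)$ produces two copies of the \emph{same} $\ket{\psi_S}$ with joint probability $p^2=16Q^2$; Theorem~\ref{theo1} then contributes the factor $2|\alpha\beta|^2=2|c_1c_4+c_2c_3|^2|c_1c_4-c_2c_3|^2|a|^4|b|^4/Q^2$, the $Q^2$ cancels, and
\begin{equation}\label{eq:P4clean}
P_{\psi^{\otimes4}\to\phi^+}=32\,|a|^4|b|^4\,|c_1c_4+c_2c_3|^2\,|c_1c_4-c_2c_3|^2 .
\end{equation}
The constraint $2(|a|^4+|b|^4)\leqslant1$ of Theorem~\ref{theo2} gives $|a|^4|b|^4\leqslant 1/16$, with equality at $|a|^4=|b|^4=1/4$, whence $P_{\psi^{\otimes4}\to\phi^+}\leqslant 2|c_1c_4+c_2c_3|^2|c_1c_4-c_2c_3|^2=2\left|(c_1c_4)^2-(c_2c_3)^2\right|^2$; expanding the square reproduces the right-hand side of \eqref{opt4copies}. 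Contrary to Theorem~\ref{theo2}, the optimizer $|a|=|b|\neq0$ yields a genuinely entangled $\ket{\psi_S}$, so the bound is attainable and is saturated by the protocol of Ref.~\cite{kalman2024}; allowing several success Kraus operators $\{K_i\}$ changes nothing, since the matching requirement that all retained branches output the same $\ket{\psi_S}$ forces $b_i^2/a_i^2$ to be constant, and $\sum_i 2(|a_i|^4+|b_i|^4)\leqslant1$ then leads to the identical bound.

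\textit{Main obstacle.} The manipulations above are routine; the substantive step is the structural claim invoked at the start --- that \emph{every} universal conclusive EPP on four copies restricted to two-qubit operations must factor through a pair of \emph{identical} Schmidt-basis states, so that Theorems~\ref{theo1} and~\ref{theo2} chain together and nothing is gained by producing mismatched Schmidt states or entangled intermediates of a different form. Proving this requires lifting the nullification-of-separable-subspaces constraints \eqref{systemM}--\eqref{systemM2} and the party symmetry \eqref{Mi=KiKi} from the two-copy analysis of Theorems~\ref{theo1} and~\ref{theo2} to the four-copy setting.
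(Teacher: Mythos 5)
Your proposal is correct and follows essentially the same route as the paper's own proof: compute the action of $M=K\otimes K$ on $\ket{\psi}^{\otimes 2}$, chain the stage-one probability squared with the $2|\alpha\beta|^2$ bound of Theorem~\ref{theo1}, and maximize $32|a|^4|b|^4$ under $2(|a|^4+|b|^4)\leqslant 1$ at $|a|=|b|=\sqrt{2}/2$. The structural assumption you flag as the main obstacle (that any two-qubit-operation protocol must factor through the two-stage Schmidt-state conversion) is likewise assumed rather than proven in the paper, which defers the general four-copy case to a conjecture.
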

\begin{proof}
First, note that the action of $M_i=K_i\otimes K_i$ from Theorem \ref{theo2} on two copies of $\ket{\psi}=c_1\ket{00}+c_3\ket{00}+c_3\ket{00}+c_4\ket{11}$ results in
\begin{equation}
    M_i\ket{\psi}^{\otimes 2}= \left(\alpha'\ket{00}+\beta'\ket{11}\right)\otimes\ket{00}\eqqcolon \ket{\psi'}\otimes\ket{00},
\end{equation}
where $\ket{\psi'}$ is an unnormalized state, $\alpha'=2a^2(c_1c_4+c_2c_3)$, and $\beta'=2b^2(c_1c_4-c_2c_3)$. The probability of success is given by the normalization constant $P'=\Tr\ket{\psi'}\bra{\psi'}$. From Theorem \ref{theo1}, we know that the optimal universal conclusive ECP applied to two copies of a renormalized $\ket{\psi'}$ produces $\phi^+$ with a success probability $P''=2|\alpha'\beta'|^2$. The full protocol applied to four copies is only successful if all steps are successful. Therefore, the final success probability is given by
\begin{align}
    P_{\psi^{\otimes 4}\to\phi^+}& = (P')^2P''\nonumber\\
        = &2^5|ab|^4\left\{\left| c_2^4 c_3^4 \right| + \left| c_1^4 c_4^4 \right| -2\mathrm{Re}\left[ c_1^2 c_4^2 (c_2^*)^2 (c_3^*)^2\right]\right\}.
\end{align}
From the domain of $f$, one can check that $2^{5}|ab|^4 \leqslant 2$, and this upper bound can be achieved when $a=b=\sqrt{2}/2$, which results in \eqref{opt4copies}.
\end{proof}

Next, we elaborate on why we focused on this concatenated structure:

First, there is a fundamental physical motivation. Because the input states have a completely unknown Schmidt basis, but the target maximally entangled state has a fixed, known basis, any universal protocol must eventually bridge this gap. Our concatenated approach explicitly isolates this requirement: the first stage transforms the unknown states into an intermediate state with a fixed, known Schmidt basis. Because we rigorously found the absolute optimal protocol for this specific basis-fixing task, our concatenation represents the optimal strategy among any protocol that enforces this intermediate step.

Second, there is a strong operational and experimental motivation. In current and near-term quantum architectures, the native entangling operations are strictly limited to two qubits. Implementing a genuine, collective 4-qubit joint LOCC measurement is experimentally challenging and require complex circuit decompositions highly susceptible to noise. Commonly used entanglement distillation protocols, such the BBPSSW~\cite{Bennett1996Jan} protocol and the DEJMPS~\cite{Deutsch1996Sep} protocol, have a similar structure to the protocol considered in this work: they operate on two copies at a time and the same operations applied repeatedly throughout the protocol. This can also be seen in several experimental realizations of entanglement distillation~\cite{reichle2006experimental,chen2017experimental,Kalb2017Jun,Hu2021Jan}. Restricting our analysis to protocols based on two-qubit operations ensures that our derived limits and optimal protocols are practically relevant and experimentally feasible.

A direct consequence of the above result is stated below.
\begin{corollary}\label{corollary_1}
If $a=b=\sqrt{2}/2$, then the map $\mathcal{E}_S$ from Theorem \ref{theo2} with Kraus operators \eqref{K_i_main} is an optimal universal conclusive ECP when acting on two copies of $\ket{\psi_S}=\alpha\ket{00}+\beta\ket{11}$. Therefore, $\mathcal{E}_S=\mathcal{E}_{\phi^+}$
\end{corollary}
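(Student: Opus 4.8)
The plan is to specialize the algebra already carried out in the proof of Theorem~\ref{theo3} to a Schmidt input and check that, for $a=b=\sqrt{2}/2$, the single-stage map $\mathcal{E}_S$ by itself produces $\phi^+$ with exactly the probability that Theorem~\ref{theo1} certifies as optimal. Concretely, I would take the Kraus action $M_i\ket{\psi}^{\otimes 2}=(\alpha'\ket{00}+\beta'\ket{11})\otimes\ket{00}$ with $\alpha'=2a^2(c_1c_4+c_2c_3)$ and $\beta'=2b^2(c_1c_4-c_2c_3)$, and set $c_1=\alpha$, $c_2=c_3=0$, $c_4=\beta$, so that the input is the Schmidt state $\ket{\psi_S}=\alpha\ket{00}+\beta\ket{11}$. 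Then $\alpha'=2a^2\alpha\beta$ and $\beta'=2b^2\alpha\beta$, and the choice $a^2=b^2=\tfrac12$ gives $\alpha'=\beta'=\alpha\beta$, hence
\[
M\ket{\psi_S}^{\otimes 2}=\sqrt{2}\,\alpha\beta\,\ket{\phi^+}_{AB}\otimes\ket{00}_{A'B'}.
\]
Discarding the garbage registers $A'B'$, the (unnormalized) output is $2|\alpha\beta|^2\,\phi^+$.

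From this, the two requirements of Definitions~\ref{def:UniversalCEPP} and~\ref{def:OptimalUCEPP} follow immediately. Since the successful branch is proportional to $\ket{\phi^+}$ for every $\ket{\psi_S}\in\mathcal{S}_2$, one has $\langle\phi^+\vert\mathcal{E}_S(\psi_S^{\otimes 2})\vert\phi^+\rangle=\Tr[\mathcal{E}_S(\psi_S^{\otimes 2})]$, so $\mathcal{E}_S$ is a universal conclusive EPP over the Schmidt states. Its success probability is $\Tr[\mathcal{E}_S(\psi_S^{\otimes 2})]=|\alpha'|^2+|\beta'|^2=2|\alpha\beta|^2$, which equals the optimal value $P_{\psi_S^{\otimes 2}\to\phi^+}=2|\alpha\beta|^2$ of Theorem~\ref{theo1}. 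A universal conclusive EPP that attains the optimal probability for all inputs is, by Definition~\ref{def:OptimalUCEPP}, an optimal one; therefore $\mathcal{E}_S$ with $a=b=\sqrt{2}/2$ (composed with the partial trace over $A'B'$) coincides with $\mathcal{E}_{\phi^+}$.

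For completeness I would also record that $\mathcal{E}_S$ is a legitimate LOCC map: the Kraus operator factorizes as $M=K\otimes K$ across the Alice--Bob cut $AA'|BB'$, and a short computation gives $KK^\dagger=\ket{00}\bra{00}+\ket{10}\bra{10}$, so that $MM^\dagger=(KK^\dagger)^{\otimes 2}$ is a projector and in particular $MM^\dagger\le\mathbb{I}$; the map is then completed to a full CPTP LOCC protocol by letting each party extend $K$ to a local measurement and post-select on the coincident outcome.

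There is essentially no hard step here: the corollary is a direct reading of the formulas in Theorems~\ref{theo1} and~\ref{theo3}. The one point deserving a line of care is the bookkeeping of the ancillary registers---making explicit that $\mathcal{E}_{\phi^+}$ in Theorem~\ref{theo1} is understood up to discarding the garbage subsystem, so that the factorized output $\sqrt{2}\,\alpha\beta\,\ket{\phi^+}_{AB}\otimes\ket{00}_{A'B'}$ of $\mathcal{E}_S$ is correctly interpreted as producing $\phi^+$ on $AB$ with probability $2|\alpha\beta|^2$.
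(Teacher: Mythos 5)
Your proposal is correct and follows essentially the same route as the paper, whose proof is simply the direct calculation of $\Tr\sum_i M_i\,\psi_S^{\otimes 2}M_i^\dagger$ with $a=b=\sqrt{2}/2$ yielding $2|\alpha\beta|^2$, which matches the optimal value of Theorem~\ref{theo1}. Your additional checks---the explicit factorized output $\sqrt{2}\,\alpha\beta\,\ket{\phi^+}_{AB}\otimes\ket{00}_{A'B'}$ and the verification that $MM^\dagger=(KK^\dagger)^{\otimes 2}\leqslant\mathbb{I}$---are sound but only make explicit what the paper leaves implicit.
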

\begin{proof}
The direct calculation of $\Tr \sum_iM_i \psi_S^{\otimes 2} M_i^\dagger$ with $a=b=\sqrt{2}/2$ provides the upper bound $2|\alpha\beta|^2$.
\end{proof} 

Using Corollary~\ref{corollary_1}, we can see that K\'alm\'an et al.'s protocol \cite{Kalman2025Dec} is the optimal protocol that can conclusively produce Bell pairs in this scenario, following the same scheme in Fig. \ref{fig:diagram_4_copies}. The successful branch of the map has Kraus operators $M=K_{AA'}\otimes K_{BB'}$, such that $K_{AA'}=K_{BB'}\eqqcolon K$ is given by
\begin{equation}\label{kalman_kraus}
    K=(H\otimes \ket{0}\bra{0})U_\text{CNOT}(\I\otimes \sigma_x),
\end{equation}
where $H$ is the Hadamard operator and $U_\text{CNOT}$ is the CNOT gate. The success probability of the first stage of K\'alm\'an et al.'s protocol when applied to an arbitrary pure two-qubit state \eqref{psi_c1234} is given by $P_{\text{K}}=2 \left( |c_2 c_3|^2 + |c_1 c_4|^2 \right)$ \cite{Kalman2025Dec}. The second stage of the protocol, when applied to two copies of states resulting from the first stage, yields Bell pairs with a success probability given by 
\begin{equation}
P_\text{K}'=\frac{ 
\left|(c_1 c_4)^2-(c_2 c_3)^2\right|^2 }{2\left( |c_2 c_3|^2 + |c_1 c_4|^2 \right)^2} .
\end{equation}
The entire protocol is only successful if every stage is successful. Therefore, the probability of obtaining one Bell pair from four copies of an arbitrary state \eqref{psi_c1234} is given by $\left(P_{\text{K}}\right)^2P^{\prime}_{\text{K}}$, which reproduces \eqref{opt4copies}. In fact, if we use $a=b=\sqrt{2}/2$ derived in Theorem \ref{theo3} in \eqref{K_i_main}, we reproduce \eqref{kalman_kraus}.
    
\section{Comparing universal and non-universal bounds}

When successful, a conclusive protocol always transforms the initial state $\psi$ into the target state $\phi$ using LOCCs. The maximum success probability $P_\text{Vidal}(\psi\to\phi)$ achievable by a conclusive protocol is given by Vidal's formula \cite{Vidal1999entanglement},
\begin{equation}\label{vidal}
    P_\text{Vidal}(\psi\to\phi)=\min_{l\in\{1,\ldots,n\}}\frac{E_l(\psi)}{E_l(\phi)},
\end{equation}
where $E_l(\varrho)=\sum_{i=l}^{n}\alpha_i$, for every $l\in\{1,\ldots,n\}$, are entanglement monotones written as functions of the Schmidt coefficients $\{\alpha_i\}_{i=1}^n$ of a state $\varrho$, that is,
\begin{equation}
    \ket{\varphi}=\sum_{i=1}^{n}\sqrt{\alpha_i}\ket{ii}_{AB}, \quad\alpha_i\geqslant\alpha_{i+1}\geqslant 0, \quad \sum_{i=1}^{n}\alpha_i=1.
\end{equation}
Note that Vidal's formula \eqref{vidal} provides the maximum probability $P_{\text{Vidal}}$ for converting a specific known state $\psi$ to $\phi$ but does not yield a universal protocol (one that works equally for all inputs without prior knowledge). Rather, it states that for a given pair of states, the optimal conclusive protocol achieves probability $P_{\text{Vidal}}$. In this sense, Vidal's formula does not provide the optimal \textit{universal} probability stated in Def. \ref{def:OptimalUCECP}, since it optimizes the success probability on a per-state basis.

Let us consider the input state $\ket{\psi_\lambda}=\sqrt{\lambda}\ket{00}+\sqrt{1-\lambda}\ket{11}$ such that $\lambda\in(1/2,1)$. The Schmidt coefficients $\{\alpha_i \}_{i=1}^4$ of $\psi_\lambda^{\otimes 2}$ are organized in non-increasing order as $\{\lambda^2,\lambda(1-\lambda),\lambda(1-\lambda),(1-\lambda)^2 \}$. From that, the entanglement monotones $E_l(\psi_\lambda^{\otimes 2})$ are given by
\begin{equation}\label{El_psi}
    \left\{E_l(\psi_\lambda^{\otimes 2})\right\}_{l=1}^4 = \{1,1-\lambda^2, 1-\lambda,(1-\lambda)^2 \}.
\end{equation}
Note that the target state $\phi^+$ has only two Schmidt coefficients since it is composed of only two qubits, unlike $\psi_\lambda^{\otimes 2}$. To use Vidal's formula, we can embed $\phi^+$ in a bigger space, specifically $\mc{H}_{AB}\otimes\mc{H}_{A'B'}$, by attaching it to a pure state, e.g., $\ket{00}_{A'B'}$. Thus, the embedded target state is
\begin{align}
    \ket{\phi^+}_{AB}\otimes\ket{00}_{A'B'}&=\frac{1}{\sqrt{2}}\left(\ket{0000}_{ABA'B'}+\ket{1100}_{ABA'B'}\right),\\
    &=\sum_{k=0}^3 m_k \ket{k}_{AA'}\otimes\ket{k}_{BB'},
\end{align}
where $k$ is the transformation from binary to decimal basis, $m_0=m_2=1/\sqrt{2}$, and $m_1=m_3=0$. The Schmidt coefficients $\{\beta_i \}_{i=1}^4$ of $\ket{\phi^+}\otimes\ket{00}$ are organized in non-increasing order as $\{\beta_i \}_{i=1}^4\coloneqq \left\{1/2,1/2,0,0\right\}$. Similarly, the entanglement monotones $E_l(\phi^+)$ are given by
\begin{equation}\label{El_phi+}
    \left\{E_l(\phi^+)\right\}_{l=1}^4 = \left\{1,\frac{1}{2},0,0 \right\}.
\end{equation}
Applying \eqref{El_psi} and \eqref{El_phi+} in \eqref{vidal} gives us $P_\text{Vidal}(\psi_\lambda^{\otimes 2}\to\phi^+)=\min_{i\in\{1,\ldots,4\}}\left\{1,2(1-\lambda^2),+\infty,+\infty \right\}$. Finally, the optimal probability of conclusively transforming two copies of $\psi_\lambda$ into $\phi^+$ is given by
\begin{equation}\label{Pvidal}
    P_\text{Vidal}(\psi_\lambda^{\otimes 2}\to\phi^+)=\left\{\begin{array}{ll}
        1, & \text{ if }\lambda\in(1/2,1/\sqrt{2}) \\
        2(1-\lambda^2), & \text{ if }\lambda\in[1/\sqrt{2},1) 
    \end{array} \right. .
\end{equation}

\begin{figure}[h]
    \centering
    \includegraphics[width=\linewidth]{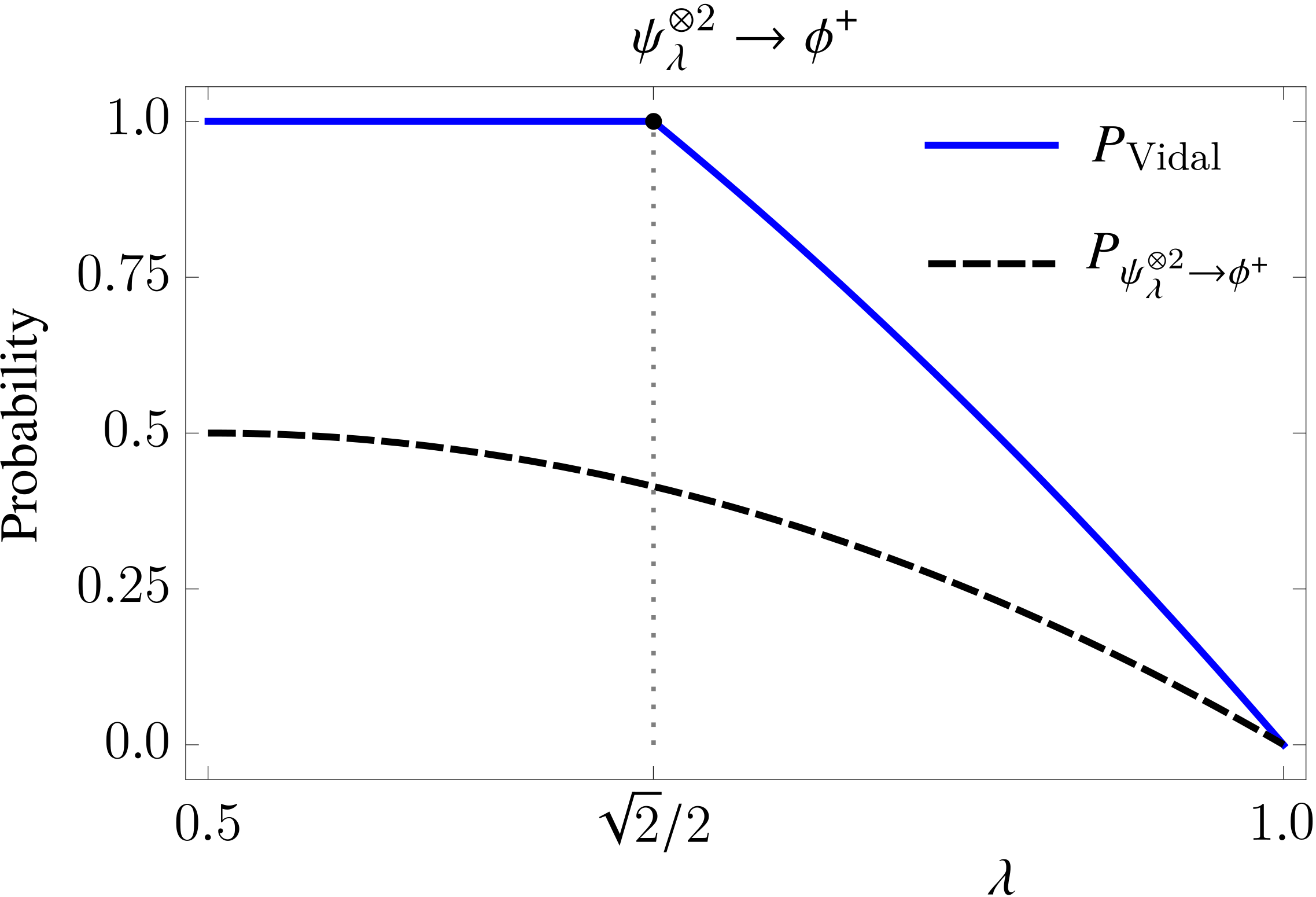}
    \caption{Probability of conclusively transforming two copies of Schmidt states $\psi_\lambda$ into a maximally entangled state $\phi^+$ as a function of $\lambda$: (blue full line) when optimized over all possible protocols given by Vidal's formula \eqref{Pvidal}; and (black dashed line) for the optimal universal protocol given by \eqref{PS2lambda}.}
    \label{fig:plotVidal}
\end{figure}

While $P_\text{Vidal}$ is optimized over all possible protocols given a specific state $\psi_\lambda$ (i.e., a specific $\lambda$), the bound from Theorem \ref{theo1} is the optimal probability that a single conclusive universal protocol can achieve for any state $\psi_\lambda$, i.e., for all $\lambda$'s. From Theorem \ref{theo1}, we have
\begin{equation}\label{PS2lambda}
    P_{\psi_\lambda^{\otimes 2}\to\phi^+}=2\lambda(1-\lambda),
\end{equation}
for $\lambda\in[1/2,1]$. In this context, if we do not know which state is being produced by the source, but we know its Schmidt basis, we can always apply the same fixed optimal universal protocol to obtain $\phi^+$ with some average probability. This average probability $\langle P_{\psi_S^{\otimes 2}\to\phi^+}\rangle$ can be obtained by integrating $P_{\psi_S^{\otimes 2}\to\phi^+}$ over $\lambda\in(1/2,1)$, following a Haar measure $f(\lambda) = 6(2\lambda - 1)^2$ (see Eq. (3.7) in Ref. \cite{Zyczkowski2001Aug}). Explicitly,
\begin{equation}
    \langle P_{\psi_\lambda^{\otimes 2}\to\phi^+}\rangle=\int_{\frac{1}{2}}^{1}P_{\psi_\lambda^{\otimes 2}\to\phi^+}f(\lambda)\text{d}\lambda=\frac{1}{5}.
\end{equation}
By applying the optimal universal protocol to an unknown source of states $\psi_\lambda$, we can obtain perfect Bell pairs with 20\% of probability. In Fig. \ref{fig:plotVidal} we present a comparison between bounds \eqref{Pvidal} and \eqref{PS2lambda}.

Now, let us assume that we do not know the Schmidt basis of the states produced by the source. If the source produces copies of an unknown pure two-qubit state, we can always apply K\'alm\'an et al.'s protocol~\cite{Kalman2025Dec} to four copies of such states to obtain one Bell pair with some probability. The expected final probability can be calculated by using moments of a Dirichlet distribution, resulting in $\langle P_{\psi^{\otimes 4}\to\phi^+}\rangle = 2/105\approx 1.9\%$ (see Appendix~\ref{appendix_C}). 

\section{Conclusion}

In this work, we have established rigorous optimality bounds for universal conclusive entanglement concentration protocols that transform multiple copies of pure two-qubit states into a maximally entangled Bell state. We focused on protocols that operate without prior knowledge of the input state's entanglement structure, a key requirement for practical applications.

Our main contributions are threefold. First, for states with a known Schmidt basis, we derived the optimal success probability ($2|\alpha\beta|^2$) for converting two copies into a Bell state using LOCC (Theorem~\ref{theo1}). Second, for arbitrary two-qubit pure states with unknown Schmidt basis, we determined the optimal probability of converting two copies into a state within a fixed Schmidt basis (Theorem~\ref{theo2}), bounded by $2(|c_1c_4| + |c_2c_3|)^2$. Third, by combining these results, we obtained the tight bound in Eq.~\eqref{opt4copies} for converting four copies of an arbitrary two-qubit pure state directly into a Bell state using concatenated protocols based on two-qubit operations (Theorem~\ref{theo3}). 

We also demonstrate that the protocol by Kalman et al.~\cite{Kalman2025Dec} saturates these bounds, proving its optimality as a universal conclusive ECP. Furthermore, we computed the expected success probability over Haar-random states, obtaining an average value of $\approx 1.9\%$, which serves as a benchmark for practical implementations. 

A critical insight from our work is the inherent trade-off between universality and efficiency: While Vidal's formula achieves higher conversion probabilities for specific input states (e.g., $P_{\text{Vidal}} = 1$ for $\lambda < 1/\sqrt{2}$), universal protocols necessarily incur a success probability reduction [e.g., $2\lambda(1-\lambda)$] due to the constraint of operating without prior state knowledge. This quantifies the fundamental cost of universality in entanglement distillation.

While our work establishes tight bounds for the direct distillation of a Bell state from four copies of an arbitrary pure state, it is important to emphasize that these fundamental limits are strictly derived within the class of concatenated LOCC protocols based on two-qubit operations. We have shown that this two-stage structure is highly motivated by both the physical necessity of fixing the unknown Schmidt basis and the practical constraints of near-term experimental hardware.

For future work, protocols that process four copies \textit{directly} without intermediate Schmidt-state conversion warrant investigation. We conjecture that such protocols cannot surpass the success probability bound in Eq.~\eqref{opt4copies}, although our two-stage approach only rigorously saturates the fundamental limits for concatenated LOCC protocols based on two-qubit operations. This implies that Schmidt-basis alignment remains an optimal strategy for universal operations. Our framework provides a foundation for extending this analysis to mixed states and larger systems, with implications for real-world quantum communication architectures.

\begin{acknowledgments}
The authors acknowledge the support from the National Science Centre Poland (Grant No. 2022/46/E/ST2/00115) and within the QuantERA II Programme (Grant No. 2021/03/Y/ST2/00178, acronym ExTRaQT, and Grant No. 2021/03/Y/ST2/00177, acronym PhoMemtor) that has received funding from the European Union's Horizon 2020 research and innovation programme under Grant Agreement No. 101017733.
   
\end{acknowledgments}

\bibliography{bibliography}

\begin{thebibliography}{29}%
\makeatletter
\providecommand \@ifxundefined [1]{%
 \@ifx{#1\undefined}
}%
\providecommand \@ifnum [1]{%
 \ifnum #1\expandafter \@firstoftwo
 \else \expandafter \@secondoftwo
 \fi
}%
\providecommand \@ifx [1]{%
 \ifx #1\expandafter \@firstoftwo
 \else \expandafter \@secondoftwo
 \fi
}%
\providecommand \natexlab [1]{#1}%
\providecommand \enquote  [1]{``#1''}%
\providecommand \bibnamefont  [1]{#1}%
\providecommand \bibfnamefont [1]{#1}%
\providecommand \citenamefont [1]{#1}%
\providecommand \href@noop [0]{\@secondoftwo}%
\providecommand \href [0]{\begingroup \@sanitize@url \@href}%
\providecommand \@href[1]{\@@startlink{#1}\@@href}%
\providecommand \@@href[1]{\endgroup#1\@@endlink}%
\providecommand \@sanitize@url [0]{\catcode `\\12\catcode `\$12\catcode `\&12\catcode `\#12\catcode `\^12\catcode `\_12\catcode `\%12\relax}%
\providecommand \@@startlink[1]{}%
\providecommand \@@endlink[0]{}%
\providecommand \url  [0]{\begingroup\@sanitize@url \@url }%
\providecommand \@url [1]{\endgroup\@href {#1}{\urlprefix }}%
\providecommand \urlprefix  [0]{URL }%
\providecommand \Eprint [0]{\href }%
\providecommand \doibase [0]{https://doi.org/}%
\providecommand \selectlanguage [0]{\@gobble}%
\providecommand \bibinfo  [0]{\@secondoftwo}%
\providecommand \bibfield  [0]{\@secondoftwo}%
\providecommand \translation [1]{[#1]}%
\providecommand \BibitemOpen [0]{}%
\providecommand \bibitemStop [0]{}%
\providecommand \bibitemNoStop [0]{.\EOS\space}%
\providecommand \EOS [0]{\spacefactor3000\relax}%
\providecommand \BibitemShut  [1]{\csname bibitem#1\endcsname}%
\let\auto@bib@innerbib\@empty
\bibitem [{\citenamefont {Horodecki}\ \emph {et~al.}(2009)\citenamefont {Horodecki}, \citenamefont {Horodecki}, \citenamefont {Horodecki},\ and\ \citenamefont {Horodecki}}]{horodecki2009quantum}%
  \BibitemOpen
  \bibfield  {author} {\bibinfo {author} {\bibfnamefont {R.}~\bibnamefont {Horodecki}}, \bibinfo {author} {\bibfnamefont {P.}~\bibnamefont {Horodecki}}, \bibinfo {author} {\bibfnamefont {M.}~\bibnamefont {Horodecki}},\ and\ \bibinfo {author} {\bibfnamefont {K.}~\bibnamefont {Horodecki}},\ }\bibfield  {title} {\bibinfo {title} {Quantum entanglement},\ }\href {https://doi.org/10.1103/RevModPhys.81.865} {\bibfield  {journal} {\bibinfo  {journal} {Reviews of Modern Physics}\ }\textbf {\bibinfo {volume} {81}},\ \bibinfo {pages} {865} (\bibinfo {year} {2009})}\BibitemShut {NoStop}%
\bibitem [{\citenamefont {Bennett}\ \emph {et~al.}(1993)\citenamefont {Bennett}, \citenamefont {Brassard}, \citenamefont {Cr\'epeau}, \citenamefont {Jozsa}, \citenamefont {Peres},\ and\ \citenamefont {Wootters}}]{bennett1993teleporting}%
  \BibitemOpen
  \bibfield  {author} {\bibinfo {author} {\bibfnamefont {C.~H.}\ \bibnamefont {Bennett}}, \bibinfo {author} {\bibfnamefont {G.}~\bibnamefont {Brassard}}, \bibinfo {author} {\bibfnamefont {C.}~\bibnamefont {Cr\'epeau}}, \bibinfo {author} {\bibfnamefont {R.}~\bibnamefont {Jozsa}}, \bibinfo {author} {\bibfnamefont {A.}~\bibnamefont {Peres}},\ and\ \bibinfo {author} {\bibfnamefont {W.~K.}\ \bibnamefont {Wootters}},\ }\bibfield  {title} {\bibinfo {title} {{Teleporting an unknown quantum state via dual classical and Einstein-Podolsky-Rosen channels}},\ }\href {https://doi.org/10.1103/PhysRevLetters70.1895} {\bibfield  {journal} {\bibinfo  {journal} {Physical Review Letters}\ }\textbf {\bibinfo {volume} {70}},\ \bibinfo {pages} {1895} (\bibinfo {year} {1993})}\BibitemShut {NoStop}%
\bibitem [{\citenamefont {Ekert}(1991)}]{ekert1991}%
  \BibitemOpen
  \bibfield  {author} {\bibinfo {author} {\bibfnamefont {A.~K.}\ \bibnamefont {Ekert}},\ }\bibfield  {title} {\bibinfo {title} {{Quantum cryptography based on Bell's theorem}},\ }\href {https://doi.org/10.1103/PhysRevLetters67.661} {\bibfield  {journal} {\bibinfo  {journal} {Physical Review Letters}\ }\textbf {\bibinfo {volume} {67}},\ \bibinfo {pages} {661} (\bibinfo {year} {1991})}\BibitemShut {NoStop}%
\bibitem [{\citenamefont {Shor}(2006)}]{Shor2006Jul}%
  \BibitemOpen
  \bibfield  {author} {\bibinfo {author} {\bibfnamefont {P.~W.}\ \bibnamefont {Shor}},\ }\bibfield  {title} {\bibinfo {title} {{Polynomial-Time Algorithms for Prime Factorization and Discrete Logarithms on a Quantum Computer}},\ }\href {https://epubs.siam.org/doi/10.1137/S0097539795293172} {\bibfield  {journal} {\bibinfo  {journal} {SIAM Journal on Computing}\ } (\bibinfo {year} {2006})}\BibitemShut {NoStop}%
\bibitem [{\citenamefont {Bennett}\ \emph {et~al.}(1996{\natexlab{a}})\citenamefont {Bennett}, \citenamefont {Brassard}, \citenamefont {Popescu}, \citenamefont {Schumacher}, \citenamefont {Smolin},\ and\ \citenamefont {Wootters}}]{Bennett1996Jan}%
  \BibitemOpen
  \bibfield  {author} {\bibinfo {author} {\bibfnamefont {C.~H.}\ \bibnamefont {Bennett}}, \bibinfo {author} {\bibfnamefont {G.}~\bibnamefont {Brassard}}, \bibinfo {author} {\bibfnamefont {S.}~\bibnamefont {Popescu}}, \bibinfo {author} {\bibfnamefont {B.}~\bibnamefont {Schumacher}}, \bibinfo {author} {\bibfnamefont {J.~A.}\ \bibnamefont {Smolin}},\ and\ \bibinfo {author} {\bibfnamefont {W.~K.}\ \bibnamefont {Wootters}},\ }\bibfield  {title} {\bibinfo {title} {{Purification of Noisy Entanglement and Faithful Teleportation via Noisy Channels}},\ }\href {https://doi.org/10.1103/PhysRevLetters76.722} {\bibfield  {journal} {\bibinfo  {journal} {Physical Review Letters}\ }\textbf {\bibinfo {volume} {76}},\ \bibinfo {pages} {722} (\bibinfo {year} {1996}{\natexlab{a}})}\BibitemShut {NoStop}%
\bibitem [{\citenamefont {Deutsch}\ \emph {et~al.}(1996)\citenamefont {Deutsch}, \citenamefont {Ekert}, \citenamefont {Jozsa}, \citenamefont {Macchiavello}, \citenamefont {Popescu},\ and\ \citenamefont {Sanpera}}]{Deutsch1996Sep}%
  \BibitemOpen
  \bibfield  {author} {\bibinfo {author} {\bibfnamefont {D.}~\bibnamefont {Deutsch}}, \bibinfo {author} {\bibfnamefont {A.}~\bibnamefont {Ekert}}, \bibinfo {author} {\bibfnamefont {R.}~\bibnamefont {Jozsa}}, \bibinfo {author} {\bibfnamefont {C.}~\bibnamefont {Macchiavello}}, \bibinfo {author} {\bibfnamefont {S.}~\bibnamefont {Popescu}},\ and\ \bibinfo {author} {\bibfnamefont {A.}~\bibnamefont {Sanpera}},\ }\bibfield  {title} {\bibinfo {title} {{Quantum Privacy Amplification and the Security of Quantum Cryptography over Noisy Channels}},\ }\href {https://doi.org/10.1103/PhysRevLetters77.2818} {\bibfield  {journal} {\bibinfo  {journal} {Physical Review Letters}\ }\textbf {\bibinfo {volume} {77}},\ \bibinfo {pages} {2818} (\bibinfo {year} {1996})}\BibitemShut {NoStop}%
\bibitem [{\citenamefont {Bennett}\ \emph {et~al.}(1996{\natexlab{b}})\citenamefont {Bennett}, \citenamefont {DiVincenzo}, \citenamefont {Smolin},\ and\ \citenamefont {Wootters}}]{Bennett1996Nov}%
  \BibitemOpen
  \bibfield  {author} {\bibinfo {author} {\bibfnamefont {C.~H.}\ \bibnamefont {Bennett}}, \bibinfo {author} {\bibfnamefont {D.~P.}\ \bibnamefont {DiVincenzo}}, \bibinfo {author} {\bibfnamefont {J.~A.}\ \bibnamefont {Smolin}},\ and\ \bibinfo {author} {\bibfnamefont {W.~K.}\ \bibnamefont {Wootters}},\ }\bibfield  {title} {\bibinfo {title} {{Mixed-state entanglement and quantum error correction}},\ }\href {https://doi.org/10.1103/PhysRevA.54.3824} {\bibfield  {journal} {\bibinfo  {journal} {Physical Review A}\ }\textbf {\bibinfo {volume} {54}},\ \bibinfo {pages} {3824} (\bibinfo {year} {1996}{\natexlab{b}})}\BibitemShut {NoStop}%
\bibitem [{\citenamefont {D{\ifmmode\ddot{u}\else\"{u}\fi}r}\ and\ \citenamefont {Briegel}(2007)}]{Dur2007Jul}%
  \BibitemOpen
  \bibfield  {author} {\bibinfo {author} {\bibfnamefont {W.}~\bibnamefont {D{\ifmmode\ddot{u}\else\"{u}\fi}r}}\ and\ \bibinfo {author} {\bibfnamefont {H.~J.}\ \bibnamefont {Briegel}},\ }\bibfield  {title} {\bibinfo {title} {{Entanglement purification and quantum error correction}},\ }\href {https://doi.org/10.1088/0034-4885/70/8/R03} {\bibfield  {journal} {\bibinfo  {journal} {Reports on Progress in Physics}\ }\textbf {\bibinfo {volume} {70}},\ \bibinfo {pages} {1381} (\bibinfo {year} {2007})}\BibitemShut {NoStop}%
\bibitem [{\citenamefont {Torres}\ and\ \citenamefont {Bern{\ifmmode\acute{a}\else\'{a}\fi}d}(2016)}]{Torres2016Nov}%
  \BibitemOpen
  \bibfield  {author} {\bibinfo {author} {\bibfnamefont {J.~M.}\ \bibnamefont {Torres}}\ and\ \bibinfo {author} {\bibfnamefont {J.~Z.}\ \bibnamefont {Bern{\ifmmode\acute{a}\else\'{a}\fi}d}},\ }\bibfield  {title} {\bibinfo {title} {{Conditions for entanglement purification with general two-qubit states}},\ }\href {https://doi.org/10.1103/PhysRevA.94.052329} {\bibfield  {journal} {\bibinfo  {journal} {Physical Review A}\ }\textbf {\bibinfo {volume} {94}},\ \bibinfo {pages} {052329} (\bibinfo {year} {2016})}\BibitemShut {NoStop}%
\bibitem [{\citenamefont {Rozp{\k{e}}dek}\ \emph {et~al.}(2018)\citenamefont {Rozp{\k{e}}dek}, \citenamefont {Schiet}, \citenamefont {Thinh}, \citenamefont {Elkouss}, \citenamefont {Doherty},\ and\ \citenamefont {Wehner}}]{Rozpedek2018Jun}%
  \BibitemOpen
  \bibfield  {author} {\bibinfo {author} {\bibfnamefont {F.}~\bibnamefont {Rozp{\k{e}}dek}}, \bibinfo {author} {\bibfnamefont {T.}~\bibnamefont {Schiet}}, \bibinfo {author} {\bibfnamefont {L.~P.}\ \bibnamefont {Thinh}}, \bibinfo {author} {\bibfnamefont {D.}~\bibnamefont {Elkouss}}, \bibinfo {author} {\bibfnamefont {A.~C.}\ \bibnamefont {Doherty}},\ and\ \bibinfo {author} {\bibfnamefont {S.}~\bibnamefont {Wehner}},\ }\bibfield  {title} {\bibinfo {title} {{Optimizing practical entanglement distillation}},\ }\href {https://doi.org/10.1103/PhysRevA.97.062333} {\bibfield  {journal} {\bibinfo  {journal} {Physical Review A}\ }\textbf {\bibinfo {volume} {97}},\ \bibinfo {pages} {062333} (\bibinfo {year} {2018})}\BibitemShut {NoStop}%
\bibitem [{\citenamefont {Preti}\ \emph {et~al.}(2022)\citenamefont {Preti}, \citenamefont {Calarco}, \citenamefont {Torres},\ and\ \citenamefont {Bern{\ifmmode\acute{a}\else\'{a}\fi}d}}]{Preti2022Aug}%
  \BibitemOpen
  \bibfield  {author} {\bibinfo {author} {\bibfnamefont {F.}~\bibnamefont {Preti}}, \bibinfo {author} {\bibfnamefont {T.}~\bibnamefont {Calarco}}, \bibinfo {author} {\bibfnamefont {J.~M.}\ \bibnamefont {Torres}},\ and\ \bibinfo {author} {\bibfnamefont {J.~Z.}\ \bibnamefont {Bern{\ifmmode\acute{a}\else\'{a}\fi}d}},\ }\bibfield  {title} {\bibinfo {title} {{Optimal two-qubit gates in recurrence protocols of entanglement purification}},\ }\href {https://doi.org/10.1103/PhysRevA.106.022422} {\bibfield  {journal} {\bibinfo  {journal} {Physical Review A}\ }\textbf {\bibinfo {volume} {106}},\ \bibinfo {pages} {022422} (\bibinfo {year} {2022})}\BibitemShut {NoStop}%
\bibitem [{\citenamefont {Miguel-Ramiro}\ \emph {et~al.}(2025)\citenamefont {Miguel-Ramiro}, \citenamefont {Pirker},\ and\ \citenamefont {D{\ifmmode\ddot{u}\else\"{u}\fi}r}}]{Miguel-Ramiro2025Apr}%
  \BibitemOpen
  \bibfield  {author} {\bibinfo {author} {\bibfnamefont {J.}~\bibnamefont {Miguel-Ramiro}}, \bibinfo {author} {\bibfnamefont {A.}~\bibnamefont {Pirker}},\ and\ \bibinfo {author} {\bibfnamefont {W.}~\bibnamefont {D{\ifmmode\ddot{u}\else\"{u}\fi}r}},\ }\bibfield  {title} {\bibinfo {title} {{Improving entanglement purification through coherent superposition of roles}},\ }\href {https://doi.org/10.22331/q-2025-04-09-1702} {\bibfield  {journal} {\bibinfo  {journal} {Quantum}\ }\textbf {\bibinfo {volume} {9}},\ \bibinfo {pages} {1702} (\bibinfo {year} {2025})}\BibitemShut {NoStop}%
\bibitem [{\citenamefont {Pan}\ \emph {et~al.}(2001)\citenamefont {Pan}, \citenamefont {Simon}, \citenamefont {Brukner},\ and\ \citenamefont {Zeilinger}}]{Pan2001Apr}%
  \BibitemOpen
  \bibfield  {author} {\bibinfo {author} {\bibfnamefont {J.-W.}\ \bibnamefont {Pan}}, \bibinfo {author} {\bibfnamefont {C.}~\bibnamefont {Simon}}, \bibinfo {author} {\bibfnamefont {{\ifmmode\check{C}\else\v{C}\fi}.}~\bibnamefont {Brukner}},\ and\ \bibinfo {author} {\bibfnamefont {A.}~\bibnamefont {Zeilinger}},\ }\bibfield  {title} {\bibinfo {title} {{Entanglement purification for quantum communication}},\ }\href {https://doi.org/10.1038/35074041} {\bibfield  {journal} {\bibinfo  {journal} {Nature}\ }\textbf {\bibinfo {volume} {410}},\ \bibinfo {pages} {1067} (\bibinfo {year} {2001})}\BibitemShut {NoStop}%
\bibitem [{\citenamefont {Pan}\ \emph {et~al.}(2003)\citenamefont {Pan}, \citenamefont {Gasparoni}, \citenamefont {Ursin}, \citenamefont {Weihs},\ and\ \citenamefont {Zeilinger}}]{Pan2003May}%
  \BibitemOpen
  \bibfield  {author} {\bibinfo {author} {\bibfnamefont {J.-W.}\ \bibnamefont {Pan}}, \bibinfo {author} {\bibfnamefont {S.}~\bibnamefont {Gasparoni}}, \bibinfo {author} {\bibfnamefont {R.}~\bibnamefont {Ursin}}, \bibinfo {author} {\bibfnamefont {G.}~\bibnamefont {Weihs}},\ and\ \bibinfo {author} {\bibfnamefont {A.}~\bibnamefont {Zeilinger}},\ }\bibfield  {title} {\bibinfo {title} {{Experimental entanglement purification of arbitrary unknown states}},\ }\href {https://doi.org/10.1038/nature01623} {\bibfield  {journal} {\bibinfo  {journal} {Nature}\ }\textbf {\bibinfo {volume} {423}},\ \bibinfo {pages} {417} (\bibinfo {year} {2003})}\BibitemShut {NoStop}%
\bibitem [{\citenamefont {Hu}\ \emph {et~al.}(2021)\citenamefont {Hu}, \citenamefont {Huang}, \citenamefont {Sheng}, \citenamefont {Zhou}, \citenamefont {Liu}, \citenamefont {Guo}, \citenamefont {Zhang}, \citenamefont {Xing}, \citenamefont {Huang}, \citenamefont {Li},\ and\ \citenamefont {Guo}}]{Hu2021Jan}%
  \BibitemOpen
  \bibfield  {author} {\bibinfo {author} {\bibfnamefont {X.-M.}\ \bibnamefont {Hu}}, \bibinfo {author} {\bibfnamefont {C.-X.}\ \bibnamefont {Huang}}, \bibinfo {author} {\bibfnamefont {Y.-B.}\ \bibnamefont {Sheng}}, \bibinfo {author} {\bibfnamefont {L.}~\bibnamefont {Zhou}}, \bibinfo {author} {\bibfnamefont {B.-H.}\ \bibnamefont {Liu}}, \bibinfo {author} {\bibfnamefont {Y.}~\bibnamefont {Guo}}, \bibinfo {author} {\bibfnamefont {C.}~\bibnamefont {Zhang}}, \bibinfo {author} {\bibfnamefont {W.-B.}\ \bibnamefont {Xing}}, \bibinfo {author} {\bibfnamefont {Y.-F.}\ \bibnamefont {Huang}}, \bibinfo {author} {\bibfnamefont {C.-F.}\ \bibnamefont {Li}},\ and\ \bibinfo {author} {\bibfnamefont {G.-C.}\ \bibnamefont {Guo}},\ }\bibfield  {title} {\bibinfo {title} {{Long-Distance Entanglement Purification for Quantum Communication}},\ }\href {https://doi.org/10.1103/PhysRevLetters126.010503} {\bibfield  {journal} {\bibinfo  {journal} {Physical Review Letters}\ }\textbf {\bibinfo {volume} {126}},\ \bibinfo {pages} {010503}
  (\bibinfo {year} {2021})}\BibitemShut {NoStop}%
\bibitem [{\citenamefont {Ecker}\ \emph {et~al.}(2021)\citenamefont {Ecker}, \citenamefont {Sohr}, \citenamefont {Bulla}, \citenamefont {Huber}, \citenamefont {Bohmann},\ and\ \citenamefont {Ursin}}]{Ecker2021Jul}%
  \BibitemOpen
  \bibfield  {author} {\bibinfo {author} {\bibfnamefont {S.}~\bibnamefont {Ecker}}, \bibinfo {author} {\bibfnamefont {P.}~\bibnamefont {Sohr}}, \bibinfo {author} {\bibfnamefont {L.}~\bibnamefont {Bulla}}, \bibinfo {author} {\bibfnamefont {M.}~\bibnamefont {Huber}}, \bibinfo {author} {\bibfnamefont {M.}~\bibnamefont {Bohmann}},\ and\ \bibinfo {author} {\bibfnamefont {R.}~\bibnamefont {Ursin}},\ }\bibfield  {title} {\bibinfo {title} {{Experimental Single-Copy Entanglement Distillation}},\ }\href {https://doi.org/10.1103/PhysRevLetters127.040506} {\bibfield  {journal} {\bibinfo  {journal} {Physical Review Letters}\ }\textbf {\bibinfo {volume} {127}},\ \bibinfo {pages} {040506} (\bibinfo {year} {2021})}\BibitemShut {NoStop}%
\bibitem [{\citenamefont {Reichle}\ \emph {et~al.}(2006)\citenamefont {Reichle}, \citenamefont {Leibfried}, \citenamefont {Knill}, \citenamefont {Britton}, \citenamefont {Blakestad}, \citenamefont {Jost}, \citenamefont {Langer}, \citenamefont {Ozeri}, \citenamefont {Seidelin},\ and\ \citenamefont {Wineland}}]{reichle2006experimental}%
  \BibitemOpen
  \bibfield  {author} {\bibinfo {author} {\bibfnamefont {R.}~\bibnamefont {Reichle}}, \bibinfo {author} {\bibfnamefont {D.}~\bibnamefont {Leibfried}}, \bibinfo {author} {\bibfnamefont {E.}~\bibnamefont {Knill}}, \bibinfo {author} {\bibfnamefont {J.}~\bibnamefont {Britton}}, \bibinfo {author} {\bibfnamefont {R.~B.}\ \bibnamefont {Blakestad}}, \bibinfo {author} {\bibfnamefont {J.~D.}\ \bibnamefont {Jost}}, \bibinfo {author} {\bibfnamefont {C.}~\bibnamefont {Langer}}, \bibinfo {author} {\bibfnamefont {R.}~\bibnamefont {Ozeri}}, \bibinfo {author} {\bibfnamefont {S.}~\bibnamefont {Seidelin}},\ and\ \bibinfo {author} {\bibfnamefont {D.~J.}\ \bibnamefont {Wineland}},\ }\bibfield  {title} {\bibinfo {title} {Experimental purification of two-atom entanglement},\ }\href {https://doi.org/10.1038/nature05146} {\bibfield  {journal} {\bibinfo  {journal} {Nature}\ }\textbf {\bibinfo {volume} {443}},\ \bibinfo {pages} {838} (\bibinfo {year} {2006})}\BibitemShut {NoStop}%
\bibitem [{\citenamefont {Kalb}\ \emph {et~al.}(2017)\citenamefont {Kalb}, \citenamefont {Reiserer}, \citenamefont {Humphreys}, \citenamefont {Bakermans}, \citenamefont {Kamerling}, \citenamefont {Nickerson}, \citenamefont {Benjamin}, \citenamefont {Twitchen}, \citenamefont {Markham},\ and\ \citenamefont {Hanson}}]{Kalb2017Jun}%
  \BibitemOpen
  \bibfield  {author} {\bibinfo {author} {\bibfnamefont {N.}~\bibnamefont {Kalb}}, \bibinfo {author} {\bibfnamefont {A.~A.}\ \bibnamefont {Reiserer}}, \bibinfo {author} {\bibfnamefont {P.~C.}\ \bibnamefont {Humphreys}}, \bibinfo {author} {\bibfnamefont {J.~J.~W.}\ \bibnamefont {Bakermans}}, \bibinfo {author} {\bibfnamefont {S.~J.}\ \bibnamefont {Kamerling}}, \bibinfo {author} {\bibfnamefont {N.~H.}\ \bibnamefont {Nickerson}}, \bibinfo {author} {\bibfnamefont {S.~C.}\ \bibnamefont {Benjamin}}, \bibinfo {author} {\bibfnamefont {D.~J.}\ \bibnamefont {Twitchen}}, \bibinfo {author} {\bibfnamefont {M.}~\bibnamefont {Markham}},\ and\ \bibinfo {author} {\bibfnamefont {R.}~\bibnamefont {Hanson}},\ }\bibfield  {title} {\bibinfo {title} {{Entanglement distillation between solid-state quantum network nodes}},\ }\href {https://doi.org/10.1126/science.aan0070} {\bibfield  {journal} {\bibinfo  {journal} {Science}\ }\textbf {\bibinfo {volume} {356}},\ \bibinfo {pages} {928} (\bibinfo {year} {2017})}\BibitemShut {NoStop}%
\bibitem [{\citenamefont {Zang}\ \emph {et~al.}(2025)\citenamefont {Zang}, \citenamefont {Chen}, \citenamefont {Chitambar}, \citenamefont {Suchara},\ and\ \citenamefont {Zhong}}]{Zang2025May}%
  \BibitemOpen
  \bibfield  {author} {\bibinfo {author} {\bibfnamefont {A.}~\bibnamefont {Zang}}, \bibinfo {author} {\bibfnamefont {X.}~\bibnamefont {Chen}}, \bibinfo {author} {\bibfnamefont {E.}~\bibnamefont {Chitambar}}, \bibinfo {author} {\bibfnamefont {M.}~\bibnamefont {Suchara}},\ and\ \bibinfo {author} {\bibfnamefont {T.}~\bibnamefont {Zhong}},\ }\bibfield  {title} {\bibinfo {title} {{No-Go Theorems for Universal Entanglement Purification}},\ }\href {https://doi.org/10.1103/PhysRevLett.134.190803} {\bibfield  {journal} {\bibinfo  {journal} {Physical Review Letters}\ }\textbf {\bibinfo {volume} {134}},\ \bibinfo {pages} {190803} (\bibinfo {year} {2025})}\BibitemShut {NoStop}%
\bibitem [{\citenamefont {Vidal}\ \emph {et~al.}(2000)\citenamefont {Vidal}, \citenamefont {Jonathan},\ and\ \citenamefont {Nielsen}}]{Vidal2000Jun}%
  \BibitemOpen
  \bibfield  {author} {\bibinfo {author} {\bibfnamefont {G.}~\bibnamefont {Vidal}}, \bibinfo {author} {\bibfnamefont {D.}~\bibnamefont {Jonathan}},\ and\ \bibinfo {author} {\bibfnamefont {M.~A.}\ \bibnamefont {Nielsen}},\ }\bibfield  {title} {\bibinfo {title} {{Approximate transformations and robust manipulation of bipartite pure-state entanglement}},\ }\href {https://doi.org/10.1103/PhysRevA.62.012304} {\bibfield  {journal} {\bibinfo  {journal} {Physical Review A}\ }\textbf {\bibinfo {volume} {62}},\ \bibinfo {pages} {012304} (\bibinfo {year} {2000})}\BibitemShut {NoStop}%
\bibitem [{\citenamefont {Vidal}(1999)}]{Vidal1999entanglement}%
  \BibitemOpen
  \bibfield  {author} {\bibinfo {author} {\bibfnamefont {G.}~\bibnamefont {Vidal}},\ }\bibfield  {title} {\bibinfo {title} {{Entanglement of Pure States for a Single Copy}},\ }\href {https://doi.org/10.1103/PhysRevLetters83.1046} {\bibfield  {journal} {\bibinfo  {journal} {Physical Review Letters}\ }\textbf {\bibinfo {volume} {83}},\ \bibinfo {pages} {1046} (\bibinfo {year} {1999})}\BibitemShut {NoStop}%
\bibitem [{\citenamefont {Chen}\ \emph {et~al.}(2017)\citenamefont {Chen}, \citenamefont {Yong}, \citenamefont {Xu}, \citenamefont {Yao}, \citenamefont {Xiang}, \citenamefont {Li}, \citenamefont {Liu}, \citenamefont {Lu}, \citenamefont {Liu}, \citenamefont {Li} \emph {et~al.}}]{chen2017experimental}%
  \BibitemOpen
  \bibfield  {author} {\bibinfo {author} {\bibfnamefont {L.-K.}\ \bibnamefont {Chen}}, \bibinfo {author} {\bibfnamefont {H.-L.}\ \bibnamefont {Yong}}, \bibinfo {author} {\bibfnamefont {P.}~\bibnamefont {Xu}}, \bibinfo {author} {\bibfnamefont {X.-C.}\ \bibnamefont {Yao}}, \bibinfo {author} {\bibfnamefont {T.}~\bibnamefont {Xiang}}, \bibinfo {author} {\bibfnamefont {Z.-D.}\ \bibnamefont {Li}}, \bibinfo {author} {\bibfnamefont {C.}~\bibnamefont {Liu}}, \bibinfo {author} {\bibfnamefont {H.}~\bibnamefont {Lu}}, \bibinfo {author} {\bibfnamefont {N.-L.}\ \bibnamefont {Liu}}, \bibinfo {author} {\bibfnamefont {L.}~\bibnamefont {Li}}, \emph {et~al.},\ }\bibfield  {title} {\bibinfo {title} {Experimental nested purification for a linear optical quantum repeater},\ }\href {https://doi.org/10.1038/s41566-017-0010-6} {\bibfield  {journal} {\bibinfo  {journal} {Nature Photonics}\ }\textbf {\bibinfo {volume} {11}},\ \bibinfo {pages} {695} (\bibinfo {year} {2017})}\BibitemShut {NoStop}%
\bibitem [{\citenamefont {Zhou}\ \emph {et~al.}(2025)\citenamefont {Zhou}, \citenamefont {Huang}, \citenamefont {Sheng}, \citenamefont {Guo}, \citenamefont {Hu}, \citenamefont {Huang}, \citenamefont {Li}, \citenamefont {Guo},\ and\ \citenamefont {Liu}}]{Zhou2025Jul}%
  \BibitemOpen
  \bibfield  {author} {\bibinfo {author} {\bibfnamefont {L.}~\bibnamefont {Zhou}}, \bibinfo {author} {\bibfnamefont {C.-X.}\ \bibnamefont {Huang}}, \bibinfo {author} {\bibfnamefont {Y.-B.}\ \bibnamefont {Sheng}}, \bibinfo {author} {\bibfnamefont {Y.}~\bibnamefont {Guo}}, \bibinfo {author} {\bibfnamefont {X.-M.}\ \bibnamefont {Hu}}, \bibinfo {author} {\bibfnamefont {Y.-F.}\ \bibnamefont {Huang}}, \bibinfo {author} {\bibfnamefont {C.-F.}\ \bibnamefont {Li}}, \bibinfo {author} {\bibfnamefont {G.-C.}\ \bibnamefont {Guo}},\ and\ \bibinfo {author} {\bibfnamefont {B.-H.}\ \bibnamefont {Liu}},\ }\bibfield  {title} {\bibinfo {title} {{Observation of Residual Entanglement in Entanglement Purification}},\ }\href {https://doi.org/10.1103/kbw2-fdqn} {\bibfield  {journal} {\bibinfo  {journal} {Phys. Rev. Lett.}\ }\textbf {\bibinfo {volume} {135}},\ \bibinfo {pages} {050801} (\bibinfo {year} {2025})}\BibitemShut {NoStop}%
\bibitem [{\citenamefont {K{\ifmmode\acute{a}\else\'{a}\fi}lm{\ifmmode\acute{a}\else\'{a}\fi}n}\ \emph {et~al.}(2025)\citenamefont {K{\ifmmode\acute{a}\else\'{a}\fi}lm{\ifmmode\acute{a}\else\'{a}\fi}n}, \citenamefont {G{\ifmmode\acute{a}\else\'{a}\fi}bris}, \citenamefont {Jex},\ and\ \citenamefont {Kiss}}]{Kalman2025Dec}%
  \BibitemOpen
  \bibfield  {author} {\bibinfo {author} {\bibfnamefont {O.}~\bibnamefont {K{\ifmmode\acute{a}\else\'{a}\fi}lm{\ifmmode\acute{a}\else\'{a}\fi}n}}, \bibinfo {author} {\bibfnamefont {A.}~\bibnamefont {G{\ifmmode\acute{a}\else\'{a}\fi}bris}}, \bibinfo {author} {\bibfnamefont {I.}~\bibnamefont {Jex}},\ and\ \bibinfo {author} {\bibfnamefont {T.}~\bibnamefont {Kiss}},\ }\bibfield  {title} {\bibinfo {title} {{Universal, Unambiguous Concentration and Distillation of Bell pairs}},\ }\href {https://doi.org/10.1103/rcx1-w6j7} {\bibfield  {journal} {\bibinfo  {journal} {Physical Review Letter}\ }\textbf {\bibinfo {volume} {135}},\ \bibinfo {pages} {260202} (\bibinfo {year} {2025})}\BibitemShut {NoStop}%
\bibitem [{\citenamefont {Zyczkowski}\ and\ \citenamefont {Sommers}(2001)}]{Zyczkowski2001Aug}%
  \BibitemOpen
  \bibfield  {author} {\bibinfo {author} {\bibfnamefont {K.}~\bibnamefont {Zyczkowski}}\ and\ \bibinfo {author} {\bibfnamefont {H.-J.}\ \bibnamefont {Sommers}},\ }\bibfield  {title} {\bibinfo {title} {{Induced measures in the space of mixed quantum states}},\ }\href {https://doi.org/10.1088/0305-4470/34/35/335} {\bibfield  {journal} {\bibinfo  {journal} {Journal of Physics A: Mathematical and General}\ }\textbf {\bibinfo {volume} {34}},\ \bibinfo {pages} {7111} (\bibinfo {year} {2001})}\BibitemShut {NoStop}%
\bibitem [{\citenamefont {Bengtsson}\ and\ \citenamefont {{\.Z}yczkowski}(2017)}]{bengtsson2017geometry}%
  \BibitemOpen
  \bibfield  {author} {\bibinfo {author} {\bibfnamefont {I.}~\bibnamefont {Bengtsson}}\ and\ \bibinfo {author} {\bibfnamefont {K.}~\bibnamefont {{\.Z}yczkowski}},\ }\bibinfo {title} {Quantum operations},\ in\ \href {https://doi.org/10.1017/CBO9780511535048.011} {\emph {\bibinfo {booktitle} {Geometry of quantum states: an introduction to quantum entanglement}}}\ (\bibinfo  {publisher} {Cambridge Univ. Press},\ \bibinfo {year} {2017})\ p.\ \bibinfo {pages} {251}\BibitemShut {NoStop}%
\bibitem [{\citenamefont {Schiavo}(2019)}]{Schiavo2019Jan}%
  \BibitemOpen
  \bibfield  {author} {\bibinfo {author} {\bibfnamefont {L.~D.}\ \bibnamefont {Schiavo}},\ }\bibfield  {title} {\bibinfo {title} {{Characteristic functionals of Dirichlet measures}},\ }\href {https://doi.org/10.1214/19-EJP371} {\bibfield  {journal} {\bibinfo  {journal} {Electronic Journal Probability}\ }\textbf {\bibinfo {volume} {24}},\ \bibinfo {pages} {1} (\bibinfo {year} {2019})}\BibitemShut {NoStop}%
\bibitem [{\citenamefont {Fang}\ and\ \citenamefont {Liu}(2020)}]{PhysRevLett.125.060405}%
  \BibitemOpen
  \bibfield  {author} {\bibinfo {author} {\bibfnamefont {K.}~\bibnamefont {Fang}}\ and\ \bibinfo {author} {\bibfnamefont {Z.-W.}\ \bibnamefont {Liu}},\ }\bibfield  {title} {\bibinfo {title} {No-go theorems for quantum resource purification},\ }\href {https://doi.org/10.1103/PhysRevLett.125.060405} {\bibfield  {journal} {\bibinfo  {journal} {Physical Review Letters}\ }\textbf {\bibinfo {volume} {125}},\ \bibinfo {pages} {060405} (\bibinfo {year} {2020})}\BibitemShut {NoStop}%
\bibitem [{\citenamefont {Regula}\ \emph {et~al.}(2020)\citenamefont {Regula}, \citenamefont {Bu}, \citenamefont {Takagi},\ and\ \citenamefont {Liu}}]{PhysRevA.101.062315}%
  \BibitemOpen
  \bibfield  {author} {\bibinfo {author} {\bibfnamefont {B.}~\bibnamefont {Regula}}, \bibinfo {author} {\bibfnamefont {K.}~\bibnamefont {Bu}}, \bibinfo {author} {\bibfnamefont {R.}~\bibnamefont {Takagi}},\ and\ \bibinfo {author} {\bibfnamefont {Z.-W.}\ \bibnamefont {Liu}},\ }\bibfield  {title} {\bibinfo {title} {Benchmarking one-shot distillation in general quantum resource theories},\ }\href {https://doi.org/10.1103/PhysRevA.101.062315} {\bibfield  {journal} {\bibinfo  {journal} {Physical Review A}\ }\textbf {\bibinfo {volume} {101}},\ \bibinfo {pages} {062315} (\bibinfo {year} {2020})}\BibitemShut {NoStop}%
\end{thebibliography}%

\appendix

\setcounter{theorem}{0} 

\section{}\label{appendix_A} Here, we present the detailed proof of Theorem \ref{theo1} stated in the main part of the text.

\begin{theorem}
    Following Def. \ref{def:OptimalUCECP}, the optimal universal conclusive ECP over the set of Schmidt states $\ket{\psi_S}=\alpha\ket{00}+\beta\ket{11}$ transforms two copies of such a state into $\phi^+$ with optimal probability $P_{\psi_S^{\otimes 2}\to\phi^+}\coloneqq 2|\alpha\beta|^2$.
\end{theorem}
\begin{proof}
Let $M_{i}$ be the Kraus operators of the completely positive trace non-increasing map $\mathcal{E}_1$ that is also a universal conclusive ECP. Any universal conclusive ECP $\mathcal{E}_1$ acting on pure states $\ket{\psi_S}_{AB}\otimes\ket{\psi_S}_{A'B'}$ to produce a maximally entangled state $\ket{\phi^+}$ must consist solely of Kraus operators $M_{i}$ such that 
    \begin{equation}
        M_{i}\left(\ket{\psi_S}_{AB}\otimes\ket{\psi_S}_{A'B'}\right) = \sqrt{p_{i}}\vert\phi^+\rangle_{AB}\otimes\vert\text{garb}\rangle_{A'B'},
    \end{equation}
    where $\vert\text{garb}\rangle_{A'B'}$ is some product garbage state and $p_{i}\in[0,1]$.

    Since we are considering only LOCC protocols, product states can not produce any entanglement, regardless of the universal conclusive ECPs. Therefore, the Kraus operator $K_{i}$ must be such that:
    \begin{subequations}
    \begin{align}
        M_{i}\vert00\rangle_{AB}\vert00\rangle_{A'B'} &= 0\\
        M_{i}\vert11\rangle_{AB}\vert11\rangle_{A'B'} &= 0.
    \end{align}
    \end{subequations}
    From the constraints above, the action of $\mathcal{E}_{1}$ on $\ket{\psi_S}^{\otimes 2}$ is:
    \begin{equation}
        M_{i}\ket{\psi_S}^{\otimes 2}=\sqrt{2}\alpha\beta\,\, M_{i}\left(\frac{\ket{0011}+\ket{1100}}{\sqrt{2}}\right).
    \end{equation}
    Hence, the probability of success is given by
    \begin{equation}
        \operatorname{Tr}\left[\mathcal{E}_1\left(\psi_{S}^{\otimes 2}\right)\right] = \operatorname{Tr}\left[\sum_{i}p_{i}\left(2|\alpha\beta|^2\right) \phi^+\otimes \text{garb}\right]\leqslant 2|\alpha\beta|^2,
    \end{equation}
    where $p_i$ is the probability of converting $(\ket{0011}+\ket{1100})/\sqrt{2}$ into $\phi^+$. The inequality above implies that the maximal probability of obtaining $\phi^+$ from $\vert\psi_{S}\rangle^{\otimes 2}$ using a universal conclusive ECP is bounded by $2|\alpha\beta|^2$. Therefore, following Def. \ref{def:OptimalUCECP}, to be considered an optimal universal conclusive ECP from two copies of Schmidt states, a protocol must produce $\phi^+$ with optimal universal probability $P_{\psi_S^{\otimes 2}\to\phi^+}\coloneqq 2|\alpha\beta|^2$.
\end{proof}

\section{}\label{appendix_B} Below, we present the detailed proof of Theorem \ref{theo2} stated in the main part of the text.

\begin{theorem}
    To produce arbitrary states in a known Schmidt basis from two copies of arbitrary states $\ket{\psi}_{AB}$ given by \eqref{psi_c1234}, a map $\mathcal{E}_S(\cdot)=\sum_i M_i(\cdot) M_i^\dagger$ must be composed of Kraus operators $M_i=K_i\otimes K_i$ acting on $\mathcal{H}_{AA'}\otimes\mathcal{H}_{BB'}$ such that,
    \begin{equation}
        K_i=a(\ket{00}\bra{01} + \ket{00}\bra{10})+b(\ket{10}\bra{01} - \ket{10}\bra{10})
    \end{equation}
    for every $i$, where $a$ and $b$ are any non-zero complex numbers satisfying $2(|a|^4+|b|^4)\leqslant 1$. The success probability $P_{\psi^{\otimes 2}\to\psi_S}$ of such a protocol is bounded by $\overline{P}_{\psi^{\otimes 2}\to\psi_S}=2(|c_1c_4|+|c_2c_3|)^2$, i.e., $P_{\psi^{\otimes 2}\to\psi_S}<\overline{P}_{\psi^{\otimes 2}\to\psi_S}$.
\end{theorem}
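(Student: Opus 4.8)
The plan is to determine the most general Kraus operators compatible with universality, then compute the resulting success probability and extract the bound. First I would use the structural reduction already motivated in the main text: universality forces $M_i = K_i \otimes K_i$ on $\mathcal{H}_{AA'}\otimes\mathcal{H}_{BB'}$, so it suffices to constrain a single $4\times 4$ operator $K_i$ acting on two qubits. I would expand $K_i$ in the Pauli basis, $K_i = \sum_{k,l} r_{kl}\,\sigma_k\otimes\sigma_l$ with $k,l\in\{0,1,2,3\}$ (so sixteen complex parameters $r_{kl}$), and then impose the nullification constraints. The key observation is that constraints~\eqref{systemM} and~\eqref{systemM2}, when lifted through $M_i = K_i\otimes K_i$ acting on the regrouped spaces $AA'$ and $BB'$, translate into linear conditions on the $r_{kl}$: the requirement that $K_i$ annihilate $\ket{01}+\ket{10}$ is forced (from the product-state cases $c_1=c_2=0$, etc.), while $K_i$ must send the garbage-free subspace into the Schmidt-aligned span $\{\ket{00},\ket{11}\}$. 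Solving this homogeneous linear system should collapse the sixteen parameters down to just two surviving degrees of freedom, which I would rename $a \coloneqq r_{14}/4$ and $b\coloneqq r_{34}/4$, yielding exactly the claimed form~\eqref{K_i_main}.

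Next I would verify that this $K_i$ does the intended job: acting with $M_i = K_i\otimes K_i$ on $\ket{\psi}^{\otimes 2}$, using the reduction $M_i\ket{\psi}^{\otimes 2} = \sqrt{2}\,M_i\!\left(c_1 c_4\ket{\phi_1} + c_2 c_3\ket{\phi_2}\right)$ with $\ket{\phi_1} = (\ket{0011}+\ket{1100})/\sqrt{2}$ and $\ket{\phi_2} = (\ket{0110}+\ket{1001})/\sqrt{2}$, I would compute the images $K_i\ket{\phi_j}$ explicitly and check they land in $\mathrm{span}\{\ket{00},\ket{11}\}\otimes\ket{00}$, so the output is a genuine Schmidt state (this also confirms the $\alpha' = 2a^2(c_1c_4+c_2c_3)$, $\beta'=2b^2(c_1c_4-c_2c_3)$ formulae used later in Theorem~\ref{theo3}). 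Then the normalization condition $\sum_i M_i M_i^\dagger \leqslant \mathbb{I}$ reduces — since all $M_i$ are equal up to the index — to a single operator inequality, which I would evaluate to obtain $2(|a|^4+|b|^4)\leqslant 1$. With this in hand the success probability is $P_{\psi^{\otimes 2}\to\psi_S} = \sum_i \Tr[M_i\psi^{\otimes 2}M_i^\dagger] = \bignorm{\psi'}^2 = |\alpha'|^2 + |\beta'|^2 = 4|a|^4|c_1c_4+c_2c_3|^2 + 4|b|^4|c_1c_4-c_2c_3|^2$ (summed appropriately over the index).

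Finally I would extract the bound. Writing $x = |c_1c_4|$, $y = |c_2c_3|$ and using $|c_1c_4 \pm c_2c_3|^2 \leqslant (x+y)^2$, together with the constraint $2(|a|^4+|b|^4)\leqslant 1$, gives $P_{\psi^{\otimes 2}\to\psi_S} \leqslant 4(|a|^4+|b|^4)(x+y)^2 \leqslant 2(x+y)^2 = \overline{P}_{\psi^{\otimes 2}\to\psi_S}$. To get the \emph{strict} inequality I would argue that equality in the Pauli-triangle step would require the relative phase of $c_1c_4$ and $c_2c_3$ to align in both $|c_1c_4+c_2c_3|$ and $|c_1c_4-c_2c_3|$ simultaneously, which is impossible when all $c_i\neq 0$; more precisely I would track the cross term and show $\overline{P}_{\psi^{\otimes 2}\to\psi_S} - P_{\psi^{\otimes 2}\to\psi_S} \geqslant 4(1-f)|c_1c_2c_3c_4|$ with $f = 2\big||a|^4-|b|^4\big| \in (0,1)$, where $f<1$ holds precisely because both $a$ and $b$ are required to be non-zero (so that the map actually produces entangled outputs usable in stage two). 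The main obstacle I anticipate is the first step: carefully solving the linear system on the sixteen Pauli coefficients so that no admissible solution is missed, and confirming that the $K_i\otimes K_i$ ansatz is genuinely \emph{forced} rather than merely sufficient — in particular ruling out non-product Kraus operators that might still satisfy all the nullification constraints on the relevant input states.
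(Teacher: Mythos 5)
Your overall architecture matches the paper's proof (Pauli parametrization of $K_i$, nullification constraints, reduction to the two parameters $a,b$, normalization, and the gap estimate $\overline{P}-P\geqslant 4(1-f)|c_1c_2c_3c_4|$ with $f=2||a|^4-|b|^4|<1$), and your probability computation via $\alpha'=2a^2(c_1c_4+c_2c_3)$, $\beta'=2b^2(c_1c_4-c_2c_3)$ is exactly what the paper uses. However, one intermediate claim is concretely wrong and, if actually imposed, would derail the derivation: you assert that the constraints force $K_i\left(\ket{01}+\ket{10}\right)=0$. The operator \eqref{K_i_main} in the theorem you are proving sends $\ket{01}+\ket{10}\mapsto 2a\ket{00}\neq 0$; imposing your annihilation condition would force $a=0$, which contradicts the requirement that $a$ be non-zero and collapses the whole family. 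The resolution is that after regrouping $ABA'B'\to(AA')(BB')$, every state appearing in \eqref{systemM} and \eqref{systemM2} is of the form $\ket{v}_{AA'}\otimes\ket{w}_{BB'}$ (or a sum thereof) in which at least one factor lies in $\mathrm{span}\{\ket{00},\ket{11}\}$; all eight conditions are therefore already satisfied once $K_i\ket{00}=K_i\ket{11}=0$, and they impose \emph{no} direct condition on the action of $K_i$ on $\mathrm{span}\{\ket{01},\ket{10}\}$. What pins down that action is the separate requirement — which you do mention, but should promote to the load-bearing step — that $M_i\ket{\phi_{1,2}}$ be proportional to $\alpha\ket{00}_{AB}+\beta\ket{11}_{AB}$ tensored with a fixed product ancilla, i.e.\ that $14$ of the $16$ amplitudes of $M_i\ket{\phi_j}$ vanish; solving those equations (on top of the kernel conditions) is what yields the two-parameter form \eqref{K_i_main}.

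Two smaller remarks. First, deriving the normalization constraint from the operator inequality $\sum_i M_i^\dagger M_i\leqslant\I$ gives, for a single successful branch, $4\max(|a|^4,|b|^4)\leqslant 1$, which is strictly stronger than the stated $2(|a|^4+|b|^4)\leqslant 1$; the paper instead obtains the stated condition from $p_{i,j}=\|M_i\ket{\phi_j}\|^2=2(|a|^4+|b|^4)\leqslant 1$, which is the form actually needed for the bound. Either route suffices for the inequality $P<\overline{P}$. Second, your closing worry — that the product ansatz $M_i=K_i\otimes K_i$ is asserted rather than proved to be necessary — is legitimate, but the paper does not close that gap either, so you are not missing anything relative to the published argument.
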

\begin{proof}
     From \eqref{systemM2}, \eqref{M_ipsi1} reduces to
    \begin{equation}\label{Mipsipsi=Miphi1phi2}
    M_i \left(\ket{\psi}^{\otimes 2} \right)= M_i [\sqrt{2}c_1c_4\ket{\phi_1}+\sqrt{2}c_2c_3\ket{\phi_2}].
    \end{equation}
    where we introduced the notation $\ket{\phi_1}\coloneqq(\ket{0011}+\ket{1100})/\sqrt{2}$ and $\ket{\phi_2}\coloneqq(\ket{0110}+\ket{1001})/\sqrt{2}$. Now, let us consider the situation when $c_2=c_3=0$, which results in $M_i(\ket{\psi}^{\otimes 2})=\sqrt{2}c_1c_4M_i\ket{\phi_1}$. Therefore, $\sum_i M_i \psi^{\otimes 2} M_i^\dagger=2|c_1c_4|^2 \sum_iM_i\phi_1 M_i^\dagger$, which, by definition \eqref{Lambda_S}, means that
    \begin{equation}\label{MphiM=Ppsi}
        2|c_1c_4|^2 \sum_iM_i\phi_1 M_i^\dagger = N_1\psi_S^{(1)},
    \end{equation}
    where $\ket{\psi_S^{(1)}}=\alpha_1\ket{00}+\beta_1\ket{11}\in\mathcal{H}_{AB}$ for some pair of coefficients $\alpha_1,\beta_1$ and $N_1$ is a normalization constant. Since both $\phi_1$ and $\psi_S^{(1)}$ are pure states, we have that \eqref{MphiM=Ppsi} implies that 
    \begin{equation}\label{Miphi1}
        M_i\ket{\phi_1}=\sqrt{p_{i,1}}\ket{\psi_S^{(1)}},
    \end{equation}
    for every $i$, where $p_{i,1}$ is a probability distribution satisfying $\sum_i p_{i,1}=N_1/2|c_1c_4|^2$. Note that this last equation implies that $N_1\leqslant 2|c_1c_4|^2$. A similar argument provides
    \begin{equation}\label{Miphi2}
        M_i\ket{\phi_2}=\sqrt{p_{i,2}}\ket{\psi_S^{(2)}}
    \end{equation}
    for every $i$, where $p_{i,2}$ is a probability distribution satisfying $\sum_i p_{i,2}=N_2/2|c_2c_3|^2$. As before, we also have that $N_2\leqslant 2|c_2c_3|^2$. Now, using \eqref{Miphi1} and \eqref{Miphi2} in \eqref{Mipsipsi=Miphi1phi2} gives us
    \begin{align}
        M_i \ket{\psi}^{\otimes 2} =&\, c_1c_4\sqrt{2p_{i,1}}\ket{\psi_S^{(1)}}+c_2c_3\sqrt{2p_{i,2}}\ket{\psi_S^{(2)}}\nonumber\\
        =& \left(c_1c_4\sqrt{2p_{i,1}}\alpha_1+c_2c_3\sqrt{2p_{i,2}}\alpha_2\right) \ket{00}\nonumber\\
        &+\left(c_1c_4\sqrt{2p_{i,1}}\beta_1+c_2c_3\sqrt{2p_{i,2}}\beta_2\right) \ket{11}.
    \end{align}
    Finally, from \eqref{Lambda_S}, the probability $P_{\psi^{\otimes 2}\to\psi_S}=\Tr \sum_i M_i\psi^{\otimes 2} M_i^\dagger$ is expressed as
    \begin{align}
        P_{\psi^{\otimes 2}\to\psi_S}=&\,2\sum_i\left(\left|c_1c_4\sqrt{p_{i,1}}\alpha_1+c_2c_3\sqrt{p_{i,2}}\alpha_2\right|^2\right.  \nonumber\\ &+ \left. \left|c_1c_4\sqrt{p_{i,1}}\beta_1+c_2c_3\sqrt{p_{i,2}}\beta_2\right|^2\right).\label{PSpsi}
    \end{align}
    Now, let us write each Kraus operator $K_i$ in \eqref{Mi=KiKi} using a Pauli basis as follows
    \begin{equation}\label{Kpauli}
        K_i=\sum_{k,l=1}^{4} r_{kl} \sigma_k\otimes\sigma_l
    \end{equation}
    where $(\sigma_1,\sigma_2,\sigma_3,\sigma_4)=(\sigma_x,\sigma_y,\sigma_z,\I)$ and $r_{kl}^i\in\mathbb{R}_+$ for every $k,l,i$. In principle, $r_{kl}$ could be distinct for every $i$, but we are going to omit this extra index unless necessary. By using \eqref{Kpauli} in system \eqref{systemM}, we obtain, after some algebra,
    \begin{equation}
    \begin{array}{llll}\label{conditions}
        r_{21} = -r_{12},& r_{22} = r_{11},&r_{23} = \im r_{14},&r_{24} = \im r_{13},\\
        r_{41} = -\im r_{32},& r_{42} = \im r_{31}, & r_{43} = -r_{34}, & r_{44} = -r_{33},
    \end{array}
    \end{equation}
    for every $i$. We can insert Eqs. \eqref{conditions} into \eqref{Kpauli} and solve \eqref{Miphi1} and \eqref{Miphi2}. Note that the auxiliary state on the right-hand side of \eqref{Miphi1} and \eqref{Miphi2} was omitted. Without loss of generality and to make the equations more tractable, we can suppose that $\ket{\text{aux}}=\ket{00}_{A'B'}$. Also, both $\ket{\psi_S^{(1,2)}}$ are vectors such that 14 out of 16 entries are null. The resolution of these 14 equations yields 8 equivalent solutions, up to a complex phase that can be accommodated by the remaining free coefficients. One of those solutions is provided below:
    \begin{equation}
        \begin{array}{lll}
        r_{11} = r_{34}, & r_{12} = \im r_{34}, & r_{13} = r_{14},\\
        r_{31} = r_{14}, & r_{32} = \im r_{14}, & r_{33} = r_{34}.
        \end{array}
    \end{equation}
    From the relations above and \eqref{conditions}, the Kraus operators $K_i$ can be expressed in terms of only two complex parameters, namely, $r_{14}\coloneqq a/4$ and $r_{34}\coloneqq b/4$, as
    \begin{equation}\label{KiR1R2}
        K_i=a(\ket{00}\bra{01} + \ket{00}\bra{10})+b(\ket{10}\bra{01} - \ket{10}\bra{10}).
    \end{equation}
    Remember here that $M_i$ acts on $\mathcal{H}_{AA'}\otimes\mathcal{H}_{BB'}$ such that $M_i=K_i\otimes K_i$, where each $K_i$ acts locally and identically on each laboratory, i.e., $AA'$ and $BB'$.
    Finally, by using \eqref{KiR1R2} in \eqref{Miphi1} and \eqref{Miphi2}, we obtain
    $\sqrt{p_{i,j}}\alpha_{j}=\sqrt{2}a^2$ and $\sqrt{p_{i,j}}\beta_{j}=(-1)^{j-1}\sqrt{2}b^2$ for $j=1,2$. By definition, $|\alpha_j|^2+|\beta_j|^2=1$ and $0\leqslant p_{i,j}\leqslant 1$ for $j=1,2$. Therefore, coefficients $a$ and $b$ must satisfy 
    \begin{equation}\label{constrain}
       2(|a|^4+|b|^4)\leqslant 1. 
    \end{equation}
    In addition, states $\ket{\psi_S^{(1,2)}}$ must be entangled, which requires $\alpha_{1},\beta_1$ or $\alpha_2,\beta_2$ to be nonzero. That implies that both $a$ and $b$ must be nonzero.
    
    Now, let us obtain the upper bound for $P_{\psi^{\otimes 2}\to\psi_S}$. In \eqref{PSpsi}, define $A_i=c_1c_4\sqrt{p_{i,1}}\alpha_1+c_2c_3\sqrt{p_{i,2}}\alpha_2$ and $B_i=c_1c_4\sqrt{p_{i,1}}\beta_1+c_2c_3\sqrt{p_{i,2}}\beta_2$. Therefore
    \begin{align}\label{A+B}
        |A_i|^2+|B_i|^2 =&\, |c_1c_4|^2p_{i,1}+|c_2c_3|^2p_{i,2}\nonumber\\
        &+2\text{Re}\left(c_1c_4c_2^*c_3^*g \right)\sqrt{p_{i,1}p_{i,2}},
    \end{align}
    where $g\coloneqq \alpha_1\alpha_2^*+\beta_1\beta_2^*$. Since for any complex number $z$, it is always true that $\text{Re}(z)\leqslant|z|$, we have
    \begin{equation}
        \text{Re}\left(c_1c_4c_2^*c_3^* g \right)\leqslant|c_1c_4||c_2c_3| |g|.
    \end{equation}
    Observe also that $|g|=f/\sqrt{p_{i,1}p_{i,2}}$, where $f={2}\left| |a|^4-|b|^4\right|$.  From \eqref{constrain}, we have that $f\in[0,1]$. In particular, $f=1$ iff $|a|=\frac{1}{\sqrt[4]{2}}$ and $b=0$ or the opposite, which results in unentangled Schmidt states. Therefore, we must impose $f<1$, which makes \eqref{A+B} become
    \begin{align}
        \sum_i|A_i|^2+|B_i|^2 <&\, |c_1c_4|^2\left(\sum_ip_{i,1}\right)+|c_2c_3|^2\left(\sum_ip_{i,2}\right)\nonumber\\
        &+2|c_1c_4||c_2c_3|.
    \end{align}
    Since $\sum_i p_{i,j}\leqslant 1$ for $j=1,2$, we obtain
    \begin{equation}\label{bound_schmidt}
        P_{\psi^{\otimes 2}\to\psi_S} < 2\left(|c_1c_4|+|c_2c_3| \right)^2\eqqcolon \overline{P}_{\psi^{\otimes 2}\to\psi_S}. 
    \end{equation}
    Consequently, the maximum value of $P_{\psi^{\otimes 2}\to\psi_S}$ is $1/2$.
    \end{proof}

\section{}\label{appendix_C}
    Here we calculate the expected final probability $\langle P_{\psi^{\otimes 4}\to\phi^+}\rangle$. When the source produces copies of an unknown pure two-qubit state, we can always apply K\'alm\'an et al.'s protocol to a set of four states to obtain one Bell pair with some probability. Like before, this average probability can be obtained by integrating \eqref{opt4copies} over $\{c_i\}_{i=1}^4$ using a Haar measure. Under Haar measure, the squared amplitudes $x_i=|c_i|^2$ follow a Dirichlet distribution $\textrm{Dir}(1,1,1,1)$ \cite{bengtsson2017geometry}, while the phases $\theta_i=\arg(c_i)$ are independent and uniformly distributed in the interval $[0,2\pi]$. First, let us show that the average of the last term in \eqref{opt4copies} is zero. Since $c_1^2 c_4^2 (c_2^*)^2 (c_3^*)^2=x_1x_2x_3x_4e^{\im 2(\theta_1+\theta_4-\theta_2-\theta_3)}$, then
\begin{equation}
    \mathrm{Re}\left[ c_1^2 c_4^2 (c_2^*)^2 (c_3^*)^2\right]=x_1x_2x_3x_4\cos\eta.
\end{equation}
where we define $\eta=2(\theta_1+\theta_4-\theta_2-\theta_3)\mod 2\pi$, which is uniformly distributed in $[0,2\pi)$. Let us fix $\mathbf{x}=(x_1,x_2,x_3,x_4)$. A direct integration provides the conditional expectation $\mathbb{E}[\cos\eta|\mathbf{x}]=0$. Because $|c_i^4|=|c_i|^4=x_i^2$, the final expectation value is given by 
\begin{equation}
    \mathbb{E}\left[P_\mathrm{K}^\text{final}\right]=\mathbb{E}\left[\mathbb{E}[P_\mathrm{K}^\text{final}|\mathbf{x}]\right]=2\mathbb{E}[x_1^2x_4^2]+2\mathbb{E}[x_2^2x_3^2].
\end{equation}
By symmetry of the Dirichlet distribution $\text{Dir}(1,1,1,1)$, we have that $\mathbb{E}[x_1^2x_4^2]=\mathbb{E}[x_2^2x_3^2]$. The general formula for the moments of the distribution $\text{Dir}(\boldsymbol{\alpha})$ is given by \cite{Schiavo2019Jan}
\begin{equation}\label{moments}
    \mathbb{E}\left[\prod_{i} x_i^{\beta_i} \right]=\frac{\Gamma\left( \sum_i \alpha_i\right)}{\Gamma\left( \sum_i \alpha_i+\sum_i \beta_i\right)}\prod_i\frac{\Gamma(\alpha_i+\beta_i)}{\Gamma(\alpha_i)},
\end{equation}
where the Gamma function for integers is simply $\Gamma(n)=(n-1)!$. To calculate $\mathbb{E}(x_1^2x_4^2)$ we must use $(\alpha_1,\alpha_2,\alpha_3,\alpha_4)=(1,1,1,1)$ and $(\beta_1,\beta_2,\beta_3,\beta_4)=(2,0,0,2)$. Having said that, from \eqref{moments}, we obtain $\mathbb{E}(x_1^2x_4^2)=1/210$. Identical arguments give us $\mathbb{E}(x_2^2x_3^2)=1/210$, which results in the expected final probability $\langle P_{\psi^{\otimes 4}\to\phi^+}\rangle = 2/105$. Numerically, we have obtained $\langle P_{\psi^{\otimes 4}\to\phi^+}\rangle \sim 0.0190$ for $10^4$ initial states \eqref{psi_c1234} where the real and imaginary parts of each $c_i$ follow a Gaussian probability distribution with mean zero and standard deviation $1/\sqrt{2}$.

\end{document}